\newtheorem{theorem}{Theorem}%[section]
\newtheorem{lemma}{Lemma}
\newtheorem{prop}{Proposition}
\newtheorem{corollary}{Corollary}
\newtheorem{protocol}{Protocol}
\newtheorem{remark}{Remark}
\newcommand{\tr}{\operatorname{Tr}}
\newcommand{\chews}[2]{\scriptsize{\left( \!\!\! \begin{array}{c} #1 \\ #2 \end{array} \!\!\! \right)}}
\newcommand{\eps}{\varepsilon}
\newcommand{\inner}[2]{\langle #1, #2 \rangle}
\newcommand{\calA}{\mathcal{A}}
\newcommand{\calB}{\mathcal{B}}
\newcommand{\C}{\mathbb{C}}
\newcommand{\spann}{\mathrm{span}}
\newcommand{\DRIC}{\mathrm{DRIC}}
\newcommand{\dsum}{\displaystyle\sum}
\newcommand{\ketbra}[2]{\ket{#1} \bra{#2}}
\newcommand{\kb}[1]{\ketbra{#1}{#1}}
\newcommand{\brakett}[2]{\langle #1| #2 \rangle}
\def\modd{\textup{ mod }}
\begin{document}

%-----------------------------------------------------------------------------%
\title{\bf 
Simple, near-optimal quantum protocols for die-rolling
}
%-----------------------------------------------------------------------------%

\author{ 
Jamie Sikora\thanks{Centre for Quantum Technologies, National University of Singapore, and MajuLab, CNRS-UNS-NUS-NTU International Joint Research Unit, UMI 3654, Singapore. Email \tt{cqtjwjs@nus.edu.sg}}
}

\date{August 28, 2018}

\maketitle

\begin{abstract}
\emph{Die-rolling} is the cryptographic task where two mistrustful, remote parties wish to generate a random 
$D$-sided die-roll over a communication channel. 
Optimal quantum protocols for this task have been given by Aharon and Silman (New Journal of Physics, 2010) but are based on optimal weak coin-flipping protocols which are currently very complicated and not very well understood. 
In this paper, we first present very simple classical protocols for die-rolling
which have decent (and sometimes optimal) security which is in stark contrast to coin-flipping, bit-commitment, oblivious transfer, and many other two-party cryptographic primitives. 
We also present quantum protocols based on integer-commitment, a generalization of bit-commitment, where one wishes to commit to an integer. 
We analyze these protocols using semidefinite programming and finally give protocols which are very close to Kitaev's lower bound for any $D \geq 3$. 
Lastly, we briefly discuss an application of this work to the quantum state discrimination problem. 
\end{abstract} 

\section{Introduction} 
 
\emph{Die-rolling} is the two-party cryptographic primitive in which two spatially separated parties, Alice and Bob, wish to agree upon an integer  
$d \in [D] := \{ 1, \ldots, D \}$, generated uniformly at random, 
over a communication channel. 
When designing die-rolling protocols, the security goals are: 
\begin{enumerate}
  \item \emph{Completeness:} If both parties are honest, then their outcomes are the same, uniformly random, and neither party aborts. 
  \item \emph{Soundness against cheating Bob:} If Alice is honest, then a dishonest (i.e., cheating) Bob cannot influence her protocol outcome away from uniform. 
  \item \emph{Soundness against cheating Alice:} \label{it:security:cheating-Alice} If Bob is honest, then a dishonest (i.e., cheating) Alice cannot influence his protocol outcome away from uniform.  
\end{enumerate}

We note here that Alice and Bob start uncorrelated and unentangled. 
Otherwise, Alice and Bob could each start with half of the following maximally entangled state 
\[ 
\frac{1}{\sqrt D} \sum_{d \in [D]} \ket{d}_{\calA} \ket{d}_{\calB} 
\] 
and measure in the computational basis to obtain a perfectly correlated, uniformly random die-roll. 
Thus, such a primitive would be trivial if they were allowed to start entangled.   

Die-rolling is a generalization of a well-studied primitive known as \emph{coin-flipping}~\cite{Blu81} which is the special case of die-rolling when $D = 2$. 
In this paper, we analyze die-rolling protocols in a similar fashion that is widely adopted for coin-flipping protocols~\cite{ATVY00, NS03, KN04, Moc07, CK09, NST15, NST16}. 
That is, we assume perfect completeness and calculate the soundness in terms of the \emph{cheating probabilities}, as defined by the symbols:   
\begin{center}
\begin{tabularx}{\textwidth}{rX}
  $P_{B,d}^*$: & The maximum probability with which a dishonest Bob can force an honest Alice to accept the outcome $d \in [D]$ by digressing from protocol. 
  \\
  $P_{A,d}^*$: & The maximum probability with which dishonest Alice can force an honest Bob to accept the outcome $d \in [D]$ by digressing from protocol. 
\end{tabularx} 
\end{center} 
We are concerned with designing protocols which minimize the maximum of these $2D$ quantities since a protocol is only as good as its worst cheating probability. 
Coincidentally, all the protocols we consider in this paper have the property that all of Alice's cheating probabilities are equal and similarly for a cheating Bob. 
Therefore, for brevity, we introduce the following shorthand notation: 
\[ 
P_A^* := \max \{ P_{A,1}^*, \ldots, P_{A,D}^* \} 
\quad 
\text{ and } 
\quad 
P_B^* := \max \{ P_{B,1}^*, \ldots, P_{B,D}^* \}.  
\] 

When $D=2$, the security definition for die-rolling above aligns with that of \emph{strong} coin-flipping. 
For strong coin-flipping, it was shown by Kitaev~\cite{Kit03} that any quantum protocol satisfies $P_{A,1}^* P_{B,1}^* \geq 1/2$ and $P_{A,2}^* P_{B,2}^* \geq 1/2$, implying that at least one party can cheat with probability at least $1/\sqrt{2}$. 
It was later shown by Chailloux and Kerenidis~\cite{CK09} that all four cheating probabilities can be made arbitrarily close to $1/\sqrt{2}$ by using optimal quantum protocols for \emph{weak} coin-flipping as discovered by Mochon~\cite{Moc07}. 

Kitaev's proof for the lower bound on coin-flipping extends naturally to die-rolling; it can be shown that for any quantum die-rolling protocol, we have 
\[ P_{A,d}^* P_{B,d}^* \geq \frac{1}{D} \] 
for any $d \in [D]$. 
This implies the lower bound $\max \{ P_{A}^*, P_{B}^* \} \geq 1/\sqrt{D}$. 
In fact, extending the optimal coin-flipping protocol construction in~\cite{CK09}, it was shown by Aharon and Silman~\cite{AS10} that for $D > 2$, it is possible to find quantum protocols where the maximum of the $2D$ probabilities is at most $1/\sqrt{D} + \delta$, for any $\delta > 0$.  
  
The optimal protocols in~\cite{CK09} and \cite{AS10} are not explicit as they rely on using Mochon's optimal weak coin-flipping protocols as subroutines. 
Moreover, Mochon's protocols are very complicated and not given explicitly, although they have been simplified~\cite{ACGKM15}. 

The best known \emph{explicit} quantum protocol for die-rolling\footnote{The protocols considered in this paper have a much different form than these protocols.} of which we are aware is given in~\cite{AS10}. It uses three messages and has cheating probabilities 
\[  
P_A^* := \frac{D+1}{2D}
\quad \text{ and } \quad 
P_B^* := \frac{2D-1}{D^2}. 
\] 
These probabilities have the attractive property of approximating Kitaev's lower bound in the limit, but since $P_A^* \to 1/2$ as $D \to \infty$, the maximum cheating probability is quite large. 

This motivates the work in this paper which is to find simple and explicit protocols for die-rolling that approximate Kitaev's lower bound on the maximum cheating probability \[ \max \{ P_A^*, P_B^* \} \geq 1/\sqrt{D}. \] 
 
\subsection{Simple classical protocols}

We first show that simple classical protocols exist with decent security. 

\begin{protocol}[Classical protocol] 
\label{protocol:classical}
\quad 
\begin{itemize}[noitemsep] 
\item Bob chooses a subset $S \subseteq [D]$ with $|S| = m$, uniformly at random, and sends $S$ to Alice. If $|S| \neq m$, Alice aborts. 
\item Alice selects $d \in S$ uniformly at random and tells Bob her selection. If $d \not\in S$, Bob aborts.
\item Both parties output $d$. 
\end{itemize} 
\end{protocol}  

We see that this is a valid die-rolling protocol as each party outputs the same value $d \in [D]$ and each value occurs with equal probability. 
As for the cheating probabilities, it is straightforward to see that 
\[ 
P_A^* = \frac{m}{D}  
\quad \text{ and } \quad 
P_B^* = \frac{1}{m}.  
\]
Besides being extremely simple, this protocol has the following interesting properties: 
\begin{itemize}[noitemsep] 
\item The product $P_{A,d}^* P_{B,d}^* = 1/D$, for any $d \in [D]$,  saturates Kitaev's lower bound for every $d \in [D]$.  
\item For $D$ square and $m = \sqrt{D}$, we have $P_A^* = P_B^* = 1/\sqrt{D}$, yielding an optimal protocol! 
\item If $D$ is not square, the protocol is not \emph{fair},  meaning that $P_A^* \neq P_B^*$.  
\end{itemize} 

Note that to minimize $\max \{ P_A^*, P_B^* \}$, it does not make sense to choose large $m$ (greater than $\lceil \sqrt{D} \rceil$) or small $m$ (less than $\lfloor \sqrt{D} \rfloor$). 
We can see that for $D = 3$, $D = 7$, or $D = 8$, for example, that choosing the ceiling is better while for $D = 5$ or $D = 10$ choosing the floor is better. 
Thus, we keep both the cases and summarize the overall security of the above protocol in the following lemma. 
 
\begin{lemma} 
\label{lem:classical}
For $D \geq 2$, there exists a classical die-rolling protocol satisfying 
\begin{equation} 
\label{eqn:classical_lemma}
\dfrac{1}{\sqrt D} \leq \max \{ P_A^*, P_B^* \} 
= 
\min \left\{ \dfrac{\lceil \sqrt{D} \rceil}{D}, \dfrac{1}{\lfloor \sqrt{D} \rfloor} \right\} 
\end{equation}
which is optimal when $D$ is square. 
\end{lemma}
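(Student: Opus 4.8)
The plan is to take the security analysis of Protocol~\ref{protocol:classical} as given --- namely, that choosing subset size $m$ yields $P_A^* = m/D$ and $P_B^* = 1/m$ --- and then optimize over the single free integer parameter $m$. Writing $f(m) := \max\{m/D,\, 1/m\}$, the task reduces to minimizing $f$ over the integers $m \in \{1, \ldots, D\}$, and I would establish the closed form on the right-hand side of~\eqref{eqn:classical_lemma} by an elementary unimodality argument.

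First I would observe that $f$ is the pointwise maximum of the increasing map $m \mapsto m/D$ and the decreasing map $m \mapsto 1/m$, with the two branches crossing exactly at the real value $m = \sqrt{D}$. Hence for every integer $m \le \lfloor \sqrt D\rfloor$ we have $m^2 \le D$, so $m/D \le 1/m$ and thus $f(m) = 1/m$, which is decreasing; the best such $m$ is $m = \lfloor \sqrt D\rfloor$, giving $f = 1/\lfloor \sqrt D\rfloor$. Symmetrically, for every integer $m \ge \lceil \sqrt D\rceil$ we have $m^2 \ge D$, so $f(m) = m/D$, which is increasing; the best such $m$ is $m = \lceil \sqrt D\rceil$, giving $f = \lceil \sqrt D\rceil/D$. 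Since every positive integer lies in one of these two ranges (they overlap at $\sqrt D$ when $D$ is a perfect square, and meet at consecutive integers when it is not), the integer minimum of $f$ is exactly $\min\{\lceil \sqrt D\rceil/D,\, 1/\lfloor \sqrt D\rfloor\}$, which is the claimed equality.

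For the lower bound $1/\sqrt D \le \max\{P_A^*, P_B^*\}$, I would note that it follows immediately from the floor/ceiling inequalities: $\lceil \sqrt D\rceil/D \ge \sqrt D/D = 1/\sqrt D$ and $1/\lfloor \sqrt D\rfloor \ge 1/\sqrt D$, so their minimum is at least $1/\sqrt D$. (Alternatively, this is precisely Kitaev's bound $\max\{P_A^*,P_B^*\} \ge 1/\sqrt D$ recalled in the introduction.) Finally, for the optimality claim when $D$ is a perfect square, I would set $\sqrt D$ equal to the common integer value $\lfloor \sqrt D\rfloor = \lceil \sqrt D\rceil$, whence both terms of the minimum collapse to $1/\sqrt D$; this matches Kitaev's lower bound, so the protocol with $m = \sqrt D$ is optimal.

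I do not anticipate a serious obstacle here: the entire argument is an exercise in optimizing a one-variable quasiconvex function over the integers. The only point requiring mild care is verifying that the two monotone ranges $\{m \le \lfloor\sqrt D\rfloor\}$ and $\{m \ge \lceil\sqrt D\rceil\}$ jointly exhaust all admissible values of $m$, so that no intermediate integer can beat the two bracketing candidates --- this is immediate once one separates the perfect-square case (where the floor and ceiling coincide) from the non-square case (where they are consecutive integers).
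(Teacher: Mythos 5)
Your proposal is correct and follows essentially the same route as the paper, which likewise takes $P_A^* = m/D$ and $P_B^* = 1/m$ as given and observes informally that the only candidates worth considering are $m = \lfloor\sqrt{D}\rfloor$ and $m = \lceil\sqrt{D}\rceil$; your unimodality argument simply makes that observation rigorous.
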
 
 
Note that the special case of $D = 2$ has either Alice or Bob able to cheat perfectly, which is the case for all classical coin-flipping protocols. 
However, Kitaev's bound on the product of cheating probabilities is still (trivially) satisfied. 
For $D = 3$, we can choose $m = 2$ to obtain $\max \{ P_A^*, P_B^* \} = 2/3$ proving that even classical protocols can have nontrivial security, which is vastly different than the $D = 2$ case. 
The values from (\ref{eqn:classical_lemma}) for $D \in \{ 2, \ldots, 10 \}$ are later presented in Table~\ref{table:corr_vals}. 
  
\begin{remark} 
The protocol above can easily be extended for more parties. 
For example, if $k$ parties wish to roll $k$ $D$-sided dice. 
If each party rolls one die each, then if $k-1$ parties cheat, they cannot force a specific $k$-dice outcome with probability greater than $1/D$ (the probability the honest die landed in their favour). 
This is optimal in the $k$-party $D^k$-outcome setting. 
(See the work of Aharon and Silman~\cite{AS10} for further details about this setting and its security definitions.)    
\end{remark}   
  
We are not aware of other lower bounds for classical die-rolling protocols apart from those implied by Kitaev's bounds above. 
We see that sometimes classical protocols can be optimal, for example when $D$ is square. 
We now consider how to design (simple) quantum protocols and see what levels of security they can offer. 
 
\subsection{Quantum protocols based on integer-commitment} 
 
Many of the best known explicit protocols for strong coin-flipping are based on \emph{bit-commitment} \cite{Amb01,SR01,KN04, NST16}. 
Optimal protocols are known for bit-commitment as well~\cite{CK11}, but are again based on weak coin-flipping and are thus very complicated. 
  
In this paper, we generalize the simple, explicit protocols based on bit-commitment such that Alice commits to an \emph{integer} instead of a bit. 
More precisely, a quantum protocol based on integer-commitment has the following form.   

\begin{protocol}[Quantum protocol] 
\label{prot:DRDC} 
A quantum die-rolling protocol based on integer-commitment, denoted here as $\DRIC$, is defined as follows: 
\begin{itemize}[noitemsep, nolistsep]
\item Alice chooses a random $a \in [D]$ and creates the state $\ket{\psi_a} \in \calA \otimes \calB$ and sends the subsystem $\calB$ to Bob. 
\item Bob sends a uniformly random $b \in [D]$ to Alice. 
\item Alice reveals $a$ to Bob and sends him the subsystem $\calA$. 
\item Bob checks if $\calA \otimes \calB$ is in state $\ket{\psi_a}$ using the measurement $\{ \Pi_{a} := \kb{\psi_a}, \, \Pi_{abort} := I -\Pi_{a} \}$.  
Bob accepts/rejects $a$ based on his measurement outcome. 
\item If Bob does not abort, Alice and Bob output $d := a + b \modd D + 1 \in [D]$. 
\end{itemize} 
\end{protocol}  

The special case of $D = 2$ yields the structure of the simple, explicit coin-flipping protocols mentioned above. 
Indeed, these protocols are very easy to describe, one needs only the knowledge of the $D$ states $\ket{\psi_a}$ and, implicitly, the systems they act on, $\calA$ and $\calB$. 
 
We start by formulating the cheating probabilities of a $\DRIC$-protocol using semidefinite programming. Once we have established the semidefinite programming cheating strategy formulations, we are able to analyze the security of $\DRIC$-protocols. 
Furthermore, we are able to analyze \emph{modifications} to such protocols and the corresponding changes in security. 

In this paper, we present a $\DRIC$-protocol with near-optimal security. 
We develop this protocol in several steps described below. 
 
The first step is to start with a protocol with decent security. 
To do this, we show how to create a $\DRIC$-protocol with the same cheating probabilities as Protocol~\ref{protocol:classical}. 

\begin{prop} \label{prop:DRICprotocol} 
There exists a $\DRIC$-protocol with the same cheating probabilities as in Protocol~\ref{protocol:classical}. 
\end{prop}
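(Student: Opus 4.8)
The plan is to exhibit one explicit family of commitment states and then read off the two cheating probabilities from the fidelity/semidefinite-programming characterizations of cheating in a $\DRIC$-protocol. Concretely, with $\calA\cong\C^m$ and $\calB\cong\C^D$, I would take the $D$ states
\[ \ket{\psi_a} = \frac{1}{\sqrt m}\sum_{k=0}^{m-1}\ket{k}_{\calA}\otimes\ket{a+k \modd D}_{\calB}, \qquad a\in[D], \]
so that the reduced state $\rho_a := \tr_{\calA}\kb{\psi_a} = \tfrac1m\Pi_a$ is maximally mixed on the cyclic interval $T_a := \{a,\ldots,a+m-1\}\bmod D$, where $\Pi_a := \sum_{j\in T_a}\kb{j}$. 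Completeness is immediate: an honest run leaves $\calA\otimes\calB$ in exactly $\ket{\psi_a}$, so Bob's test $\kb{\psi_a}$ accepts with certainty, both parties compute $d=a+b \modd D+1$, and this is uniform since $b$ (equivalently $a$) is uniform.

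Next I would bound a cheating Bob. Forcing the outcome $d$ is equivalent to predicting Alice's committed value $a$ from the subsystem $\calB$ and then setting $b\equiv d-1-a$, so $P_{B,d}^*$ equals the optimal probability of guessing $a$ from the ensemble $\{\tfrac1D,\rho_a\}$. The key observation is that the $\rho_a$ all commute (they are diagonal in the computational basis), so an optimal measurement is the computational-basis measurement: an outcome $j$ is consistent with exactly the $m$ values $a\in\{j-m+1,\ldots,j\}$, each equally likely, giving $P_B^* = 1/m$.

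For a cheating Alice I would use the Uhlmann/steering form of her cheating probability. After committing the reduced state $\sigma$ on $\calB$ and receiving $b$, she is forced to reveal $a=d-1-b$, and the best probability of passing Bob's projective test is $F(\sigma,\rho_a)^2$, with root-fidelity $F(\rho,\sigma)=\|\sqrt\rho\sqrt\sigma\|_1$. Averaging over Bob's uniform $b$ and maximizing over her single commitment $\sigma$ yields
\[ P_{A,d}^* = \max_{\sigma}\; \frac1D\sum_{a\in[D]} F(\sigma,\rho_a)^2. \]
Taking $\sigma=I/D$ gives $F(I/D,\rho_a)^2 = m/D$ for every $a$, hence the lower bound $P_A^*\ge m/D$. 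For the matching upper bound I would write $F(\sigma,\rho_a)^2 = \tfrac1m\bigl(\tr\sqrt{\Pi_a\sigma\Pi_a}\bigr)^2$ and apply Cauchy--Schwarz to the at most $m$ nonzero eigenvalues of $\Pi_a\sigma\Pi_a$, giving $\bigl(\tr\sqrt{\Pi_a\sigma\Pi_a}\bigr)^2\le m\,\tr(\sigma\Pi_a)$ and thus $F(\sigma,\rho_a)^2\le\tr(\sigma\Pi_a)$; summing and using $\sum_a\Pi_a=mI$ gives $\tfrac1D\sum_a\tr(\sigma\Pi_a)=m/D$ for every $\sigma$. Together these give $P_A^*=m/D$, matching Protocol~\ref{protocol:classical}.

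The completeness check and Bob's bound are routine, so the one genuinely quantum step is Alice's upper bound; the main point to get right is the Uhlmann characterization of her optimal cheating and the Cauchy--Schwarz estimate certifying that $\sigma=I/D$ is optimal. I expect this to be exactly where the semidefinite-programming formulation pays off: the inequality $F(\sigma,\rho_a)^2\le\tr(\sigma\Pi_a)$ together with $\sum_a\Pi_a=mI$ is precisely the kind of dual certificate the program supplies, and the cyclic covariance $U\rho_aU^\dagger=\rho_{a+1}$ under the shift $U\colon\ket{j}\mapsto\ket{j+1}$ explains why the shift-invariant choice $\sigma=I/D$ should be optimal in the first place.
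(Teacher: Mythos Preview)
Your argument is correct, but it differs from the paper's in two substantive ways.

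First, the commitment states are different. The paper takes $\calA=\calB=\C^{|T_m|}$ with $T_m$ the family of all $m$-subsets of $[D]$ and sets $\ket{\psi_a}=\tbinom{D-1}{m-1}^{-1/2}\sum_{S\ni a}\ket{S}\ket{S}$, so $\rho_a$ is maximally mixed on the $\tbinom{D-1}{m-1}$ subsets containing $a$. Your cyclic-interval construction lives in $\calB\cong\C^D$ and yields $\rho_a=\tfrac1m\Pi_a$ supported on a contiguous window; this is more economical in dimension and makes the ``commuting reduced states'' observation for Bob immediate.

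Second, the certificates are organized differently. The paper stays inside the SDP framework it has just built: for Bob it exhibits the dual feasible $X=\frac{1}{D\,\binom{D-1}{m-1}}I_\calB$, and for Alice it plugs $Z_a=\tfrac1D\cdot\mathbf{1}[a\in S]+\eps\cdot\mathbf{1}[a\notin S]$ into the reformulated dual $\inner{Z_a^{-1}}{\rho_a}\le D$. You instead invoke the standard fact that minimum-error discrimination of commuting states is classical (for Bob) and the Uhlmann/fidelity form $P_A^*=\max_\sigma\tfrac1D\sum_a F(\sigma,\rho_a)^2$ (for Alice), closing the latter with the rank-based Cauchy--Schwarz bound $F(\sigma,\rho_a)^2\le\tr(\sigma\Pi_a)$ and $\sum_a\Pi_a=mI$. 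These are really the same certificates in different clothing: your inequality $F(\sigma,\rho_a)^2\le\tr(\sigma\Pi_a)$ is exactly what the dual choice $Z_a\approx\tfrac1D\Pi_a$ delivers, and $\sum_a Z_a=\tfrac{m}{D}I$ gives $s=m/D$. What your route buys is a self-contained argument that does not rely on the SDP machinery; what the paper's route buys is consistency with the dual formulations that are then reused verbatim in the balancing constructions of Section~\ref{sect:balancing}.
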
  

The second step is to give a process where \emph{any} $\DRIC$-protocol can be made more fair, i.e., to try to equate the maximum cheating probabilities of Alice and Bob. 
We accomplish this by modifying the protocol in order to balance Alice and Bob's cheating probabilities, thereby decreasing the overall maximum cheating probability. 

\begin{prop} \label{prop:balancing} 
If there exists a $\DRIC$-protocol with cheating probabilities $P_A^* = \alpha$ and $P_B^* = \beta$, then there exists a $\DRIC$-protocol with maximum cheating probability 
\[ \max \{ P_A^*, P_B^* \} \leq \frac{D \max \{ \beta, \alpha \} - \min \{ \beta, \alpha \}}{D |\beta -  \alpha| + D - 1} \leq \max \{ \beta, \alpha \}. \] 
Moreover, the last inequality is strict when $\alpha \neq \beta$ yielding a strictly better protocol.  
\end{prop}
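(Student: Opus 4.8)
The plan is to assume without loss of generality that $\beta \geq \alpha$ (otherwise swap the roles of Alice and Bob; the claimed bound is unchanged since it depends only on $\max$, $\min$, and $|\beta - \alpha|$), and then to build a new $\DRIC$-protocol that \emph{publicly mixes} the given protocol with a carefully chosen ``dummy'' $\DRIC$-protocol. The dummy is designed to be maximally unfavourable to a cheating Bob and maximally favourable to a cheating Alice, so that blending in a small amount of it handicaps Bob (lowering $\beta$) at the cost of helping Alice (raising $\alpha$), the two meeting at a balanced value. Concretely, I would take the dummy commitment states to be $\ket{\psi_a^\dagger} = \ket{a}_{\calA}\otimes\ket{\phi}_{\calB}$ for a fixed $\ket{\phi}$: here Bob's reduced state $\tr_{\calA}\kb{\psi_a^\dagger} = \kb{\phi}$ is independent of $a$, so a cheating Bob cannot guess $a$ and forces any outcome with probability only $1/D$, whereas a cheating Alice never actually commits to $a$ and can open to any value, forcing any outcome with probability $1$.

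Next I would realise the public mixture inside a single $\DRIC$-protocol using a flag placed on \emph{both} parties' registers. Introduce flag qubits $f_A$ (kept by Alice) and $f_B$ (sent to Bob) and define the new committed states
\[ \ket{\psi_a'} = \sqrt{1-p}\,\ket{0}_{f_A}\ket{0}_{f_B}\ket{\psi_a} + \sqrt{p}\,\ket{1}_{f_A}\ket{1}_{f_B}\ket{\psi_a^\dagger}. \]
These are unit vectors whose two branches are orthogonal, Bob's honest test is the single projector $\kb{\psi_a'}$, and perfect completeness is immediate. Duplicating the flag serves two purposes. First, tracing out Alice's flag makes Bob's reduced states exactly block-diagonal,
\[ \tr_{\calA f_A}\kb{\psi_a'} = (1-p)\,\big(\tr_{\calA}\kb{\psi_a}\big)\otimes\kb{0}_{f_B} + p\,\kb{\phi}\otimes\kb{1}_{f_B}, \]
so optimal discrimination of $\{a\}$ collapses to the classical mixture of the two branches, giving $P_B^*(p) = (1-p)\beta + p/D$ exactly. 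Second, the single reveal-time projector throttles the branch-$1$ contribution, so Alice's equivocation obeys $P_A^*(p) \leq (1-p)\alpha + p$.

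I would then equate the two expressions. Solving $(1-p)\beta + p/D = (1-p)\alpha + p$ gives $p = \tfrac{D(\beta-\alpha)}{D(\beta-\alpha)+D-1} \in [0,1)$ and a common value $\tfrac{D\beta - \alpha}{D(\beta-\alpha)+D-1}$, which is exactly the claimed bound after substituting $\max = \beta$ and $\min = \alpha$. To finish I would invoke Kitaev's bound $P_{A,d}^* P_{B,d}^* \geq 1/D$ together with $P_{A,d}^* \leq 1$ to conclude $\beta \geq 1/D$, which yields $\tfrac{D\beta-\alpha}{D(\beta-\alpha)+D-1} \leq \beta$; moreover $\alpha \neq \beta$ forces $\beta > 1/D$ and hence the strict inequality.

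The main obstacle is the bound on Alice. A priori one worries that a cheating Alice could simply ``jump'' entirely into the favourable dummy branch, where she cheats with probability $1$, thereby ignoring the mixing weight $p$. The reason she cannot is that Bob never measures the flag early and instead applies the single projector $\kb{\psi_a'}$ at reveal time: any commitment supported on the branch-$1$ subspace overlaps $\ket{\psi_a'}$ with squared amplitude at most $p$, so the branch-$1$ success is capped at $p$, and the flag-orthogonality renders the full $\DRIC$-cheating SDP block-diagonal so that its optimum is the stated convex combination. Making this throttling precise within the semidefinite-programming formulation of the cheating probabilities is the one step that needs genuine care; the case $\alpha > \beta$ is handled identically after replacing the dummy by its dual, namely $\ket{\psi_a^\dagger} = \ket{e}_{\calA}\ket{a}_{\calB}$, which makes Bob all-powerful and Alice powerless in that branch.
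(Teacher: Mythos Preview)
Your approach is essentially the paper's. The paper also builds the new commitment states as an orthogonal superposition of the original states with a ``dummy'' branch: for $\beta>\alpha$ it takes $\ket{\psi_a'}=\sqrt{1-t}\,\ket{\psi_a}+\sqrt{t}\,\ket{\perp,\perp}$ (one extra basis vector in each register, playing exactly the role of your flag together with $\ket{\phi}$), and for $\alpha>\beta$ it takes $\ket{\psi_a'}=\sqrt{1-t}\,\ket{\psi_a}+\sqrt{t}\,\ket{\perp_a,\perp_a}$ (your dual dummy). Since both $P_A^*$ and $P_B^*$ are determined by the reduced states $\rho_a'$ alone, and your $\rho_a'$ agree with the paper's up to a trivial embedding, the two constructions yield identical cheating probabilities; the paper then chooses the same mixing parameter and arrives at the same balanced value.

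Two remarks. First, your opening ``swap the roles of Alice and Bob'' is not a valid reduction for $\DRIC$-protocols, which are intrinsically asymmetric; fortunately you drop this and handle $\alpha>\beta$ correctly with the dual dummy, just as the paper does in a separate lemma. Second, the step you rightly single out as needing ``genuine care''---the bound $P_A^*(p)\le(1-p)\alpha+p$---is precisely where the paper invests its effort, but it does \emph{not} argue block-diagonality of Alice's primal (which is delicate, since $\kb{\psi_a'}$ carries cross terms between the two flag sectors). Instead it exhibits an explicit feasible point in Alice's dual: starting from a near-optimal dual $(s,Z_1,\ldots,Z_D)$ for the original protocol, it sets $Z_a'=\delta Z_a+\eps\kb{\perp}$ with $\delta=(1-t)+t/s$ and $\eps=\bigl(s(1-t)+t\bigr)/D$, checks $\inner{(Z_a')^{-1}}{\rho_a'}\le D$, and reads off $s'=(1-t)s+t$. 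That dual construction is the clean way to make your throttling intuition rigorous.
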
  
 
By combining the above two propositions, we are able to obtain the main result of this paper. 
  
\begin{theorem} 
\label{thm:main}
For any $D \geq 2$, there exists a (quantum) $\DRIC$-protocol  satisfying 
\[ 
\frac{1}{\sqrt D} \leq \max\{ P_{A}^*, P_{B}^* \} \leq \min \left\{ 
\dfrac{D + \lfloor \sqrt{D} \rfloor}{D(\lfloor \sqrt{D} \rfloor + 1)}
,
\dfrac{1 + \lceil \sqrt{D} \rceil}{D + \lceil \sqrt{D} \rceil}
\right\}  
\] 
which is strictly better than Protocol~\ref{protocol:classical} when $D$ is not square. 
\end{theorem}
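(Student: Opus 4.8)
The plan is to obtain the stated upper bound by feeding the two natural integer choices $m = \lfloor \sqrt{D} \rfloor$ and $m = \lceil \sqrt{D} \rceil$ through Proposition~\ref{prop:DRICprotocol} and then balancing each resulting protocol via Proposition~\ref{prop:balancing}; the left-hand inequality $1/\sqrt{D} \le \max\{P_A^*, P_B^*\}$ is exactly Kitaev's lower bound recorded in the introduction, so nothing new is needed there. First I would invoke Proposition~\ref{prop:DRICprotocol} to produce a $\DRIC$-protocol with $P_A^* = m/D$ and $P_B^* = 1/m$, and then apply the balancing transformation of Proposition~\ref{prop:balancing} with $\alpha = m/D$ and $\beta = 1/m$. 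The only bookkeeping needed before substituting is to decide which of $\alpha, \beta$ is larger, since the balancing bound is phrased in terms of $\max\{\beta,\alpha\}$, $\min\{\beta,\alpha\}$, and $|\beta - \alpha|$.

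For $m = \lfloor \sqrt{D} \rfloor =: k$ we have $k^2 \le D$, hence $\alpha = k/D \le 1/k = \beta$, so Bob is the stronger cheater; substituting $\max = 1/k$, $\min = k/D$, and $|\beta - \alpha| = 1/k - k/D$ into the balancing bound and clearing the common factor of $k$ from numerator and denominator, the key observation is that the numerator factors as $D^2 - k^2 = (D-k)(D+k)$ while the denominator factors as $(k+1)(D-k)$, so the shared factor $(D-k)$ cancels and the bound collapses to $\dfrac{D+k}{D(k+1)}$, the first term of the minimum. For $m = \lceil \sqrt{D} \rceil =: \ell$ we instead have $\ell^2 \ge D$, so now $\alpha = \ell/D \ge 1/\ell = \beta$ and Alice is the stronger cheater; the analogous substitution and cancellation (numerator $\ell^2 - 1 = (\ell-1)(\ell+1)$, denominator $(\ell-1)(\ell+D)$, cancel $(\ell-1)$) yields $\dfrac{\ell+1}{\ell+D}$, the second term. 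Retaining whichever of the two balanced protocols has the smaller maximum cheating probability gives the claimed minimum.

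Finally, I would establish strict improvement over Protocol~\ref{protocol:classical} when $D$ is not a perfect square. For such $D$ we have $\alpha \ne \beta$ in both the floor and the ceiling instantiations, so the last inequality of Proposition~\ref{prop:balancing} is strict: the floor-balanced protocol beats the classical value $1/\lfloor \sqrt{D} \rfloor = \max\{\alpha,\beta\}$ and the ceiling-balanced protocol beats $\lceil \sqrt{D} \rceil / D = \max\{\alpha,\beta\}$. A short min-of-mins argument then shows that the minimum of the two balanced values lies strictly below $\min\{\lceil \sqrt{D} \rceil / D,\, 1/\lfloor \sqrt{D} \rfloor\}$, which is precisely the classical bound of Lemma~\ref{lem:classical}: writing $b_1, b_2$ for the two classical values and $a_1 < b_1$, $a_2 < b_2$ for the balanced ones, choosing the smaller of $b_1, b_2$ (say $b_1$) gives $\min\{a_1, a_2\} \le a_1 < b_1 = \min\{b_1, b_2\}$.

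The main obstacle I anticipate is purely the algebraic bookkeeping: correctly resolving the $\max$/$\min$/absolute-value terms in the balancing bound for each sign of $\alpha - \beta$, and spotting the factorizations that let the common factors $(D-k)$ and $(\ell-1)$ cancel so that the rational expression simplifies to the clean closed forms. No new conceptual ingredient beyond the two propositions is required, since they already supply both the construction (Proposition~\ref{prop:DRICprotocol}) and the balancing optimization (Proposition~\ref{prop:balancing}).
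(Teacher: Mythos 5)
Your proposal is correct and is exactly the paper's argument: the paper proves Theorem~\ref{thm:main} simply by combining Proposition~\ref{prop:DRICprotocol} (instantiated at $m=\lfloor\sqrt{D}\rfloor$ and $m=\lceil\sqrt{D}\rceil$) with the balancing bound of Proposition~\ref{prop:balancing}, and your algebraic simplifications to $\frac{D+k}{D(k+1)}$ and $\frac{\ell+1}{\ell+D}$, as well as the strictness argument via the last inequality of Proposition~\ref{prop:balancing}, check out. (The only nit: after clearing the factor of $k$ in the floor case the numerator is $(D^2-k^2)/D$ rather than $D^2-k^2$, but your final expression correctly retains the $D$ in the denominator.)
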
 

Since 
$\min \left\{ \dfrac{D + \lfloor \sqrt{D} \rfloor}{D(\lfloor \sqrt{D} \rfloor + 1)}, \dfrac{1 + \lceil \sqrt{D} \rceil}{D + \lceil \sqrt{D} \rceil} \right\} \approx \dfrac{1}{\sqrt D}$ for large $D$, this bound is very close to optimal. 
To compare numbers, we list the values for $D \in \{ 2, \ldots, 10 \}$, below. 

\begin{table}[ht]
\caption{
Values of our bounds (as truncated percentages) for various protocols and values of $D$. 
We see that the quantum protocol performs very well, even for $D$ as small as $3$.
} 
\label{table:corr_vals} 
\begin{center}
\begin{tabular}{|r||c|c|c|c|c|c|c|c|c|}
\hline
$D$ & $2$ & $3$ & $4$ & $5$ & $6$ & $7$ & $8$ & $9$ & $10$ \\  
\hline 
Explicit Protocol in~\cite{AS10} & $75 \%$ & $66 \%$ & $62 \%$ & $60 \%$ & $58 \%$ & $57 \%$ & $56 \%$ & $55 \%$ & $55 \%$ \\
\hline 
Our Classical Protocol & $100 \%$ & $66 \%$ & $50 \%$ & $50 \%$ & $50 \%$ & $42 \%$ & $37 \%$ & $33 \%$ & $33 \%$ \\
\hline
\textbf{Our Quantum Protocol} & $75 \%$ & $60 \%$ & $50 \%$ & ${46 \%}$ & $44 \%$ & $40 \%$ & $36 \%$ & $33 \%$ & ${32 \%}$ \\ 
\hline
Kitaev's lower bound & $70 \%$ & $57 \%$ & $50 \%$ & $44 \%$ & $40 \%$ & $37 \%$ & $35 \%$ & $33 \%$ & $31 \%$ \\ 
\hline 
\end{tabular}
\end{center}
\end{table} 

\vspace{-0.25cm}
 
\paragraph{Related literature.} 
Quantum protocols for a closely related cryptographic task known as string-commitment have been considered~\cite{Kent03, Tsu05, Tsu06, BCHLW08, Jain08}. 
Technically, this is the case of integer-commitment when $D = 2^n$ (if the string has $n$ bits). 
It is worth noting that the quantum protocols considered in this paper are quite similar, but the security definitions are very different. 
Roughly speaking, they are concerned with quantum protocols where Alice is able to ``cheat'' on $a$ bits and Bob is able to ``learn'' $b$ bits of information about the $n$ bit string. Multiple protocols and security trade-offs are given in the above references. 

The use of semidefinite programming has been very valuable in the study of quantum cryptographic protocols, see for example~\cite{Kit03, Moc05, Moc07, CKS13, NST15, NST16}. 
Roughly speaking, if one is able to formulate cheating probabilities as semidefinite programs, then the problem of analyzing cryptographic security can be translated into a concrete mathematical problem. 
Moreover, one then has the entire theory of semidefinite programming at their disposal. 
This is the approach taken in this work, to shine new light on a cryptographic task using the lens of semidefinite programming. 
  
\subsection{Kitaev's lower bound and the quantum state discrimination problem} 

The security analysis of $\DRIC$-protocols has many similarities to the \emph{quantum state discrimination problem}. 
Suppose you are given a quantum state $\rho \in \{ \rho_1, \ldots, \rho_n \}$ with respective  probabilities $p_1, \ldots, p_n$. 
The quantum state discrimination problem is to determine which state you have been given (by means of measuring it) with the maximum probability of being correct. We only briefly discuss this problem in this work; the interested reader is referred to the survey~\cite{Spehner14} and the references therein. 

We give a very short proof of Kitaev's lower bound for the special case of $\DRIC$-protocols. Afterwards, we show that it can be generalized to show the following bound for the quantum state discrimination problem. 

\begin{prop} \label{prop:QSD}
If given a state from the set $\{ \rho_1, \ldots, \rho_n \}$, with respective probabilities $\{ p_1, \ldots, p_n \}$, then there exists a POVM to learn which state was given with success probability at least 
$\lambda_{\min} \left( \left( \sum_{i=1}^n W_i^{-1} \right)^{-1} \right)$ 
for any positive definite Hermitian 
$\{ W_1, \ldots, W_n \}$ satisfying $\inner{W_i}{\rho_i} \leq 1$, for all $i \in [n]$. 
Here, $\lambda_{\min}$ denotes the smallest eigenvalue of a Hermitian matrix. 
\end{prop}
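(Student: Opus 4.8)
The plan is to exhibit a single explicit POVM, built directly from the weight matrices $\{W_i\}$, and to show that its \emph{conditional} success probability on each individual state $\rho_i$ already exceeds $\lambda_{\min}\left(\left(\sum_i W_i^{-1}\right)^{-1}\right)$. Since the bound will hold state-by-state, it will automatically hold for the averaged success probability $\sum_i p_i \inner{M_i}{\rho_i}$ no matter what the prior $\{p_i\}$ is; this is precisely why the $p_i$ do not appear in the stated bound, and I would flag this observation explicitly.

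Concretely, I would set $G := \sum_{i=1}^n W_i^{-1}$, which is positive definite since each $W_i$ is, and define $M_i := G^{-1/2} W_i^{-1} G^{-1/2}$. Each $M_i$ is positive semidefinite (a congruence of the positive definite $W_i^{-1}$), and $\sum_i M_i = G^{-1/2}\big(\sum_i W_i^{-1}\big) G^{-1/2} = G^{-1/2} G\, G^{-1/2} = I$, so $\{M_i\}$ is a genuine POVM. Hence $\sum_i p_i \inner{M_i}{\rho_i}$ is a valid lower bound on the optimal state-discrimination probability, and it remains only to control each term $\inner{M_i}{\rho_i} = \tr\!\left(G^{-1/2} W_i^{-1} G^{-1/2} \rho_i\right)$.

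The crux is a careful application of the Cauchy--Schwarz inequality in the Hilbert--Schmidt inner product. For fixed $i$ I would introduce $X_i := W_i^{-1/2} G^{-1/2} \rho_i^{1/2}$ and $Y_i := W_i^{1/2} \rho_i^{1/2}$, and verify the three identities $\tr(X_i^\dagger X_i) = \inner{M_i}{\rho_i}$, $\tr(Y_i^\dagger Y_i) = \inner{W_i}{\rho_i} \leq 1$, and $\tr(X_i^\dagger Y_i) = \tr\!\left(G^{-1/2} \rho_i\right)$. Cauchy--Schwarz then gives
\[
\left(\tr\!\left(G^{-1/2} \rho_i\right)\right)^2 \;\leq\; \inner{M_i}{\rho_i}\,\inner{W_i}{\rho_i} \;\leq\; \inner{M_i}{\rho_i}.
\]
Finally, since $G^{-1/2} \succeq \sqrt{\lambda_{\min}(G^{-1})}\, I$ and $\tr(\rho_i) = 1$, I get $\tr\!\left(G^{-1/2}\rho_i\right) \geq \sqrt{\lambda_{\min}(G^{-1})}$, which combined with the displayed inequality yields $\inner{M_i}{\rho_i} \geq \lambda_{\min}(G^{-1})$ for every $i$, completing the argument.

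The main obstacle is choosing the right Cauchy--Schwarz pairing. The naive attempt---pairing $\rho_i^{1/2}$ against itself split across $M_i$ and $M_i^{-1}$---produces the $G$-conjugated weight $G^{1/2} W_i G^{1/2}$, which the constraint $\inner{W_i}{\rho_i} \leq 1$ cannot control because of the stray factors of $G$. The pairing above is engineered to peel off exactly \emph{one} factor of $G^{-1/2}$, so that the constraint enters cleanly through $\inner{W_i}{\rho_i}$ while the residual cross term collapses to $\tr(G^{-1/2}\rho_i)$, which is directly bounded below by the smallest eigenvalue of $G^{-1/2}$. Identifying this cancellation is the one genuinely delicate point; the remaining steps are routine verifications.
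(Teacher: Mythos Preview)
Your proof is correct. You build the ``pretty good measurement'' $M_i = G^{-1/2}W_i^{-1}G^{-1/2}$ from the weights and use a well-chosen Hilbert--Schmidt Cauchy--Schwarz pairing to bound each $\inner{M_i}{\rho_i}$ individually; the verifications are all sound, including the cross term $\tr(X_i^\dagger Y_i)=\tr(G^{-1/2}\rho_i)$ and the eigenvalue estimate at the end.

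The paper takes a different route. It embeds the given $\rho_a$ in a $\DRIC$-protocol via purifications $\ket{\psi_a}$, identifies $W_a=(DZ_a)^{-1}$ with the dual variables in Alice's SDP~(\ref{eqn:newdual}), and then \emph{reuses the Kitaev argument verbatim}: adjust one $Z_a$ so that $\sum_a Z_a = sI_{\calB}$, take $M_a=Z_a/s$ (an asymmetric normalization, not your square-root one), and use the operator inequality $I_\calA\otimes Z_a\succeq\frac{1}{D}\kb{\psi_a}$ to get $\inner{\rho_a}{Z_a}\ge 1/D$ term-by-term. The substitution $W_a=(DZ_a)^{-1}$ then converts $\beta\ge 1/(\alpha D)$ into the stated bound. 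In effect, the paper's purification step is a disguised Cauchy--Schwarz (the equivalence in~(\ref{eqn:trace}) is doing that work), so the two arguments are morally related but packaged very differently. Your version is more elementary and self-contained, requiring no purifications or duality machinery; the paper's version has the advantage of making the link to Kitaev's lower bound explicit, which is how it argues tightness of the proposition via protocols saturating $P_A^*P_B^*=1/D$.
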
 

Note that the above proposition is indeed independent of the $p_i$'s and could thus probably be strengthened. However, we use cryptographic reasoning to argue that  this bound can be tight.     
   
\paragraph{Paper organization.} 

In Section~\ref{sect:SDP} we develop the semidefinite programming cheating strategy formulations for Alice and Bob. In Section~\ref{sect:qprotocol} we exhibit a $\DRIC$-protocol then use the semidefinite programming formulations to prove Proposition~\ref{prop:DRICprotocol}, that the protocol has the same cheating probabilities as Protocol~\ref{protocol:classical}. Section~\ref{sect:balancing} shows how to balance the probabilities in a $\DRIC$-protocol by showing how to reduce Bob's cheating, then how to reduce Alice's. Combining these yields a proof of Proposition~\ref{prop:balancing}. Lastly, in Section~\ref{sect:Kitaev}, we give a short proof of Kitaev's lower bound when applied to $\DRIC$-protocols then generalize it to the quantum state discrimination problem to prove Proposition~\ref{prop:QSD}.  
 
%------------------------------------------------------------------------------------------------------------------------------------------------------%
\section{Semidefinite programming cheating strategy formulations} 
\label{sect:SDP} 
%------------------------------------------------------------------------------------------------------------------------------------------------------% 

In this section, we use the theory of semidefinite programming to formulate Alice and Bob's maximum cheating probabilities for a $\DRIC$-protocol. 
The formulations in this section are a generalization of those for bit-commitment, see~\cite{NST16} and the references therein for details about this special case.   

\subsection{Semidefinite programming}

Semidefinite programming is the theory of optimizing a linear function over a positive semidefinite matrix variable subject to finitely many affine constraints. 
A semidefinite program (SDP) can be written in the following form without loss of generality: 
\begin{equation} \label{primal}
p^* := \sup \{ \inner{C}{X} : \calA(X) = B, \, X \succeq 0 \} 
\end{equation} 
where $\calA$ is a linear transformation, $C$ and $B$ are Hermitian, and $X \succeq Y$ means that $X - Y$ is (Hermitian) positive semidefinite. 

Associated with every SDP is a dual SDP: 
\begin{equation} \label{dual}
d^* := \inf \{ \inner{B}{Y} : \calA^*(Y) = C + S, \, S \succeq 0, \, Y \text{ is Hermitian} \} 
\end{equation} 
where $\calA^*$ is the adjoint of $\calA$. 

We refer to the optimization problem~(\ref{primal}) as the \emph{primal} or \emph{primal SDP} and to the optimization problem~(\ref{dual}) as the \emph{dual} or \emph{dual SDP}. 
We say that the primal is \emph{feasible} if there exists an $X$ satisfying the (primal) constraints  
\[ 
\calA(X) = B 
\quad \text{ and } \quad 
X \succeq 0    
\] 
and we say the dual is \emph{feasible} if there exists $(Y,S)$ satisfying the (dual) constraints  
\[ 
\calA^*(Y) = C + S,  
\quad 
S \succeq 0,     
\quad \text{ and } \quad 
Y \text{ is Hermitian}. 
\] 
If further we have $X$ positive definite, then the primal is said to be \emph{strictly} feasible. If further we have $S$ positive definite, then the dual is said to be \emph{strictly} feasible. 

Semidefinite programming has a rich and powerful duality theory. 
In particular, we use the following:  
\begin{center}
\begin{tabularx}{\textwidth}{rX}
Weak duality: & If the primal and dual are both feasible, then $p^* \leq d^*$. \\ 
Strong duality: & If the primal and dual are both \emph{strictly} feasible, then $p^* = d^*$ 
\emph{and both attain an optimal solution}. \\
\end{tabularx} 
\end{center} 
For more information about semidefinite programming and its duality theory, the reader is referred to~\cite{BV}. 

\subsection{Cheating strategy formulations}

To study a fixed $\DRIC$-protocol, it is convenient to define the following reduced states 
\[ \rho_a := \tr_{\calA} ( \kb{\psi_a} ) \]
for all $a \in [D]$. We show that they appear in both the case of cheating Alice and cheating Bob. 
  
\paragraph{Cheating Bob.}  
To see how Bob can cheat, notice that he only has one message he sends to Alice. 
Thus, he must send $b \in [D]$ to force the outcome he wishes. 
For example, if he wishes to force the outcome $d$, he would send $b$ such that $d = a + b \modd D + 1$. 
Therefore, he must extract the value of $a$ from $\calB$ to accomplish this.  
Suppose he measures $\calB$ with the measurement 
\[ \{ M_1, \ldots, M_D \} \] 
where the outcome of the measurement corresponds to Bob's guess for $a$. 
If Alice chose $a \in [D]$, he succeeds in cheating if his guess is correct, which happens with probability 
\[ \inner{M_a}{\rho_a}. \]  
Since the choice of Alice's integer $a$ is uniformly random, we can calculate Bob's optimal cheating probability as 
\begin{equation} 
\label{Bobprimal}
P_B^* = \max \left\{ \frac{1}{D} \sum_{a \in [D]} \inner{M_a}{\rho_a} : \sum_{a \in [D]} M_a = I_{\calB}, \, M_a \succeq 0, \forall a \in [D] \right\}  
\end{equation} 
noting that the variables being optimized over correspond to a POVM measurement. 
Note that the maximum is attained since the set of feasible $(M_1, \ldots, M_D)$ forms a compact set. 

Now that Bob's optimal cheating probability is stated in terms of an SDP, we can examine its dual as shown in the lemma below. 

\begin{lemma}
For any $\DRIC$-protocol, we have 
\begin{equation} \label{Bobdual} 
P_B^* = \min \left\{ \tr(X) : X \succeq \frac{1}{D} \, \rho_a, \forall a \in [D] \right\}. 
\end{equation} 
\end{lemma}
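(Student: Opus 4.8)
The plan is to recognize (\ref{Bobdual}) as the SDP dual of the primal (\ref{Bobprimal}) in the canonical form of (\ref{primal})--(\ref{dual}), and then to invoke strong duality. First I would cast (\ref{Bobprimal}) into canonical form by packaging the tuple $(M_1, \dots, M_D)$ of positive semidefinite operators as a single block-diagonal variable $\tilde{X} = \bigoplus_{a \in [D]} M_a$ acting on $\calB^{\oplus D}$; here $\tilde{X} \succeq 0$ is equivalent to positivity of each block $M_a$. Under this identification the objective becomes $\inner{C}{\tilde{X}}$ with $C = \frac{1}{D} \bigoplus_{a \in [D]} \rho_a$, and the single POVM constraint $\sum_{a \in [D]} M_a = I_{\calB}$ is the linear equation $\calA(\tilde{X}) = I_{\calB}$, where the map $\calA(\tilde{X}) := \sum_{a \in [D]} M_a$ sends block-diagonal operators into $\Herm(\calB)$.

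Next I would compute the adjoint and read off the dual. For any $Y \in \Herm(\calB)$ we have $\inner{\calA(\tilde{X})}{Y} = \sum_{a \in [D]} \inner{M_a}{Y} = \inner{\tilde{X}}{\bigoplus_{a \in [D]} Y}$, so $\calA^*(Y) = \bigoplus_{a \in [D]} Y$, the block-diagonal operator repeating $Y$ in each of the $D$ blocks. The dual constraint $\calA^*(Y) = C + S$ with $S \succeq 0$ therefore forces $S$ to be block diagonal as well, and reading it block by block gives $Y = \frac{1}{D}\rho_a + S_a$ with $S_a \succeq 0$, which is exactly the condition $Y \succeq \frac{1}{D}\rho_a$ for every $a \in [D]$. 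Since the dual objective is $\inner{B}{Y} = \inner{I_{\calB}}{Y} = \tr(Y)$, renaming $Y$ to $X$ recovers precisely the minimization in (\ref{Bobdual}).

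Finally I would show that the two optimal values coincide (and are attained) by verifying strict feasibility of both programs, so that strong duality applies. The choice $M_a = \frac{1}{D} I_{\calB} \succ 0$ for all $a$ is strictly primal-feasible since $\sum_{a \in [D]} M_a = I_{\calB}$, and $X = c\, I_{\calB}$ for any constant $c > 1/D$ is strictly dual-feasible because each slack $X - \frac{1}{D}\rho_a = c\, I_{\calB} - \frac{1}{D}\rho_a \succ 0$, using $0 \preceq \rho_a \preceq I_{\calB}$. With both Slater conditions in hand, strong duality yields $p^* = d^*$ with both optima attained, which proves the lemma. The only step requiring genuine care is the bookkeeping of the block-diagonal embedding and the resulting collapse of the dual constraint to the family $X \succeq \frac{1}{D}\rho_a$; the feasibility checks themselves are routine.
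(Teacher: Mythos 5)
Your proposal is correct and follows essentially the same route as the paper: identify (\ref{Bobdual}) as the dual of (\ref{Bobprimal}) in the canonical form (\ref{primal})--(\ref{dual}), then exhibit strictly feasible points on both sides (the paper uses $M_a = \frac{1}{D} I_{\calB}$ and $X = I_{\calB}$) and invoke strong duality. The paper simply leaves the block-diagonal bookkeeping and the adjoint computation as ``one can check,'' which you have carried out explicitly and correctly.
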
 

\begin{proof} 
One can check using the definitions~(\ref{primal}) and (\ref{dual}) that the optimization problem (\ref{Bobdual}) is the dual of (\ref{Bobprimal}). 
Defining $M_a = \frac{1}{D} I_{\calB}$, for all $a \in [D]$, yields a strictly feasible solution for the primal. Also, $X = I_{\calB}$ is a strictly feasible solution for the dual. 
Thus, by strong duality, both the primal and dual attain an optimal solution and their optimal values are the same.  
\end{proof} 

We refer to the optimization problem~(\ref{Bobprimal}) as \emph{Bob's primal SDP} and to the optimization problem~(\ref{Bobdual}) as \emph{Bob's dual SDP}.  
The utility of having dual SDP formulations is that any feasible solution yields an \emph{upper bound} on the maximum cheating probability. 
Proving upper bounds on cheating probabilities would otherwise be a very hard task. 
 
\paragraph{Cheating Alice.} 
If Alice wishes to force Bob to accept outcome $d \in [D]$, she must convince him that the state in $\calA \otimes \calB$ is indeed $\ket{\psi_a}$ where $a$ is such that $d = a + b \modd D + 1$. 
Note that this choice of $a$ is determined after learning $b$ from Bob, which occurs with uniform probability. 

To quantify the extent to which Alice can cheat, we examine the states Bob has during the protocol. 
We know that Bob measures and accepts $a$ with the measurement operator $\Pi_{a} := \kb{\psi_a}$. 
Let $(a, \calA)$ be Alice's last message. 
Then Bob's state at the end of the protocol is given by a density operator $\sigma_a$ acting on $\calA \otimes \calB$ which is accepted with probability $\inner{\sigma_a}{\kb{\psi_a}}$. 
Note that Alice's first message $\calB$ is in state 
$\sigma := \tr_{\calA}(\sigma_a)$ which is independent of $a$ (since Alice's first message does not depend on $a$ when she cheats). 
Thus, the states under Bob's control are subject to the constraints 
\begin{equation} 
\label{Alice_constraints}
\tr_{\calA}(\sigma_a) = \sigma, \, \forall a \in [D], \quad 
\tr(\sigma) = 1, \quad 
\sigma, \sigma_1 \ldots, \sigma_D \succeq 0. 
\end{equation} 
(Note that $\tr(\sigma_a) = 1$, for all $a \in [D]$, is implied by the constraints above, and is thus omitted.)  
On the other hand, if Alice maintains a purification of the states above, then using Uhlmann's Theorem~\cite{Uhl76} she can prepare any set of states satisfying conditions~(\ref{Alice_constraints}). 

Thus, we have 
\begin{equation} 
\label{Aliceprimal} 
P_A^* = \max \left\{ 
\frac{1}{D} \sum_{a \in [D]} \inner{\sigma_a}{\kb{\psi_a}} 
: 
\tr_{\calA}(\sigma_a) = \sigma, \, \forall a \in [D],  \, 
\tr(\sigma) = 1, \, 
\sigma, \sigma_1 \ldots, \sigma_D \succeq 0   
\right\}. 
\end{equation}  
Again, since the set of feasible $(\sigma, \sigma_1, \ldots, \sigma_D)$ is compact, the above SDP attains an optimal solution. 
 
Similar to the case of cheating Bob, we can view the dual of Alice's cheating SDP above as shown in the lemma below.  
 
\begin{lemma}
For any $\DRIC$-protocol, we have 
\begin{equation} \label{Alicedual} 
P_A^* 
=  
\min \left\{ s : s I_{\calB} \succeq \sum_{a \in [D]} Z_a, \, 
I_{\calA} \otimes Z_a \succeq \frac{1}{D} \kb{\psi_a}, \, \forall a \in [D], \, Z_a \textup{ is Hermitian} 
\right\}.  
\end{equation} 
\end{lemma}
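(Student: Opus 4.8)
The plan is to establish this identity exactly as the analogous lemma for cheating Bob was handled: show that the optimization problem in~(\ref{Alicedual}) is precisely the SDP dual of Alice's primal problem~(\ref{Aliceprimal}), and then invoke strong duality after exhibiting strictly feasible points on both sides. Since~(\ref{Aliceprimal}) already attains its optimum, strong duality will deliver the claimed equality $P_A^* = \min \{ \ldots \}$.

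First I would cast~(\ref{Aliceprimal}) into the standard primal form~(\ref{primal}). The matrix variable is the block-diagonal operator $X = \sigma \oplus \sigma_1 \oplus \cdots \oplus \sigma_D \succeq 0$, with $\sigma$ acting on $\calB$ and each $\sigma_a$ on $\calA \otimes \calB$. The objective operator $C$ is block-diagonal with a zero block in the $\sigma$ slot and the block $\frac{1}{D} \kb{\psi_a}$ in each $\sigma_a$ slot, so that $\inner{C}{X} = \frac{1}{D} \sum_{a \in [D]} \inner{\sigma_a}{\kb{\psi_a}}$. The constraint map $\calA$ collects the $D$ matrix equalities $\tr_{\calA}(\sigma_a) - \sigma = 0$ on $\calB$ together with the scalar equality $\tr(\sigma) = 1$; correspondingly $B$ is zero in the $D$ matrix slots and $1$ in the scalar slot. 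The dual variable then splits as a Hermitian $Z_a$ on $\calB$ for each equality $\tr_{\calA}(\sigma_a) - \sigma = 0$ and a scalar $s$ for the trace constraint, so that the dual objective $\inner{B}{Y}$ reduces to $s$, exactly as in~(\ref{Alicedual}).

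The one computation that needs care is the adjoint $\calA^*$. On each $\sigma_a$ block the constraint applies the partial trace $\tr_{\calA}$, whose adjoint is the ampliation $Z_a \mapsto I_{\calA} \otimes Z_a$; this is precisely where the tensor factor $I_{\calA}$ in~(\ref{Alicedual}) originates. On the $\sigma$ block, $\sigma$ enters each of the $D$ equalities with a $-1$ and the trace constraint once, so the adjoint contributes $s I_{\calB} - \sum_{a \in [D]} Z_a$. Writing the dual condition $\calA^*(Y) - C = S \succeq 0$ block by block then yields precisely $I_{\calA} \otimes Z_a \succeq \frac{1}{D} \kb{\psi_a}$ for each $a$ and $s I_{\calB} \succeq \sum_{a \in [D]} Z_a$, matching the dual.

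It remains to verify strict feasibility on both sides so that strong duality applies and the two optimal values coincide. For the primal, taking $\sigma = \frac{1}{\dim \calB} I_{\calB}$ and $\sigma_a = \frac{1}{\dim(\calA \otimes \calB)} I_{\calA \otimes \calB}$ gives a positive definite point satisfying all constraints. For the dual, choosing $Z_a = c\, I_{\calB}$ with $c > 1/D$ makes $I_{\calA} \otimes Z_a - \frac{1}{D} \kb{\psi_a} = c\, I_{\calA \otimes \calB} - \frac{1}{D} \kb{\psi_a} \succ 0$ (using that $\kb{\psi_a}$ has operator norm at most $1$), and then any $s > Dc$ makes $s I_{\calB} - \sum_{a \in [D]} Z_a \succ 0$. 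Strong duality then closes the argument. The only genuine obstacle is the bookkeeping for $\calA^*$, in particular getting the adjoint of the partial trace right and correctly tracking that the shared variable $\sigma$ couples to all $D$ matrix constraints; this is routine rather than deep.
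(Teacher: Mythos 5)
Your proposal is correct and follows essentially the same route as the paper: identify~(\ref{Alicedual}) as the SDP dual of~(\ref{Aliceprimal}), exhibit strictly feasible points on both sides (the paper uses completely mixed states for the primal and $Z_a = I_{\calB}$, $s = D+1$ for the dual, which is your choice with $c=1$), and conclude by strong duality. The only difference is that you spell out the adjoint computation that the paper leaves as ``it can be checked,'' and your bookkeeping there is accurate.
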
 
  
\begin{proof} 
It can be checked that (\ref{Alicedual}) is in fact the dual of (\ref{Aliceprimal}). 
By defining $\sigma$ and each $\sigma_1, \ldots, \sigma_D$ to be completely mixed states, we have that the primal is strictly feasible. 
By defining $s = D+1$ and each $Z_1, \ldots, Z_D$ to be equal to $I_{\calB}$, we have that the dual is strictly feasible as well. The result now holds by applying strong duality.  
\end{proof}  

We refer to the optimization problem~(\ref{Aliceprimal}) as \emph{Alice's primal SDP} and the optimization problem~(\ref{Alicedual}) as \emph{Alice's dual SDP}. 
 
Note that every solution feasible in Alice's dual SDP has $Z_a$ being positive semidefinite, for all $a \in [D]$. 
We can further assume that each $Z_a$ is positive definite if we sacrifice the attainment of an optimal solution. 
This is because we can take an optimal solution $(s, Z_1, \ldots, Z_D)$ and consider ${(s + \eps D, Z_1 + \eps I_{\calB}, \ldots, Z_D + \eps I_{\calB})}$ which is also feasible for any $\eps > 0$, and $s + \eps D$ approaches $s = P_A^*$ as $\eps$ decreases to $0$. 
  
Next, we use an analysis similar to one found in~\cite{Moc05} and \cite{Wat09} to simplify the constraint 
$I_{\calA} \otimes Z_a \succeq \kb{\psi_a}$ 
when $Z_a$ is positive definite. 
Since $X \to ZXZ^{-1}$ is an automorphism of the set of positive semidefinite matrices for any fixed positive definite $Z$, we have  
\begin{equation} 
I_{\calA} \otimes Z_a \succeq \frac{1}{D} \kb{\psi_a} 
\iff  
I_{\calA \otimes \calB} \succeq (I_{\calA} \otimes Z_a^{-1/2}) \left( \frac{1}{D} \kb{\psi_a} \right) (I_{\calA} \otimes Z_a^{-1/2}). 
\label{eqn:trace}
\end{equation} 
Note that since the quantity on the right is positive semidefinite with rank at most $1$, its largest eigenvalue is equal to its trace which is equal to 
\[  
\frac{1}{D} \inner{I_{\calA} \otimes Z_a^{-1} } {\kb{\psi_a}} = \frac{1}{D} \inner{Z_a^{-1}}{\tr_{\calA}(\kb{\psi_a})}
= \frac{1}{D} \inner{Z_a^{-1}}{\rho_a}. 
\]
Thus, we can rewrite~(\ref{eqn:trace}) as 
\[ 
I_{\calA} \otimes Z_a \succeq \frac{1}{D} \kb{\psi_a} 
\iff  
\frac{1}{D} \inner{Z_a^{-1}}{\rho_a} \leq 1 
\iff
\inner{Z_a^{-1}}{\rho_a} \leq D. 
\] 
Therefore, we have the following lemma. 

\begin{lemma}
For any $\DRIC$-protocol, we have 
\begin{equation} \label{eqn:newdual}  
P_A^* 
= 
\inf 
\left\{ 
s 
: 
s I_{\calB} \succeq \sum_{a \in [D]} Z_a, \, 
\inner{Z_a^{-1}}{\rho_a} \leq D, \forall a \in [D], \, 
Z_a \textup{ is positive definite}, \, \forall a \in [D] 
\right\}. 
\end{equation} 
\end{lemma}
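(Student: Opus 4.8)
The plan is to read off (\ref{eqn:newdual}) directly from Alice's dual SDP (\ref{Alicedual}), treating the latter as an already-established formulation whose optimum $P_A^*$ is attained. The substantive content is only the rewriting of the matrix inequality $I_{\calA} \otimes Z_a \succeq \frac{1}{D}\kb{\psi_a}$ into the scalar form $\inner{Z_a^{-1}}{\rho_a} \leq D$, and this reformulation is meaningful only once each $Z_a$ is invertible. Accordingly the argument splits into two stages: first justify that restricting the feasible region to positive definite $Z_a$ leaves the optimal value unchanged (while possibly destroying attainment, hence the switch from $\min$ to $\inf$), and then apply the equivalence already derived around (\ref{eqn:trace}).

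First I would record that any feasible point of (\ref{Alicedual}) has each $Z_a \succeq 0$: the spectrum of $I_{\calA} \otimes Z_a$ coincides with that of $Z_a$ (each eigenvalue repeated $\dim \calA$ times), so $I_{\calA} \otimes Z_a \succeq \frac{1}{D}\kb{\psi_a} \succeq 0$ forces $Z_a \succeq 0$. To pass from positive semidefinite to positive definite I would use the perturbation indicated in the text: given an optimal feasible point $(s, Z_1, \ldots, Z_D)$, the point $(s + \eps D, Z_1 + \eps I_{\calB}, \ldots, Z_D + \eps I_{\calB})$ is feasible for every $\eps > 0$, because $Z_a + \eps I_{\calB} \succ 0$, because $I_{\calA} \otimes (Z_a + \eps I_{\calB}) \succeq I_{\calA} \otimes Z_a \succeq \frac{1}{D}\kb{\psi_a}$, and because $\sum_a (Z_a + \eps I_{\calB}) = \sum_a Z_a + \eps D\, I_{\calB} \preceq (s + \eps D) I_{\calB}$. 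Letting $\eps \downarrow 0$ shows the infimum over positive definite $Z_a$ is at most $P_A^*$; since the positive definite feasible region is contained in the Hermitian one, the reverse inequality is immediate, and the two optimal values therefore agree.

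With each $Z_a$ positive definite I would invoke the equivalence chain of (\ref{eqn:trace}): conjugating by the congruence $X \mapsto (I_{\calA} \otimes Z_a^{-1/2})\, X \,(I_{\calA} \otimes Z_a^{-1/2})$, an automorphism of the positive semidefinite cone, turns $I_{\calA} \otimes Z_a \succeq \frac{1}{D}\kb{\psi_a}$ into $I_{\calA \otimes \calB} \succeq (I_{\calA} \otimes Z_a^{-1/2})(\frac{1}{D}\kb{\psi_a})(I_{\calA} \otimes Z_a^{-1/2})$, whose right-hand side is rank one and hence has largest eigenvalue equal to its trace $\frac{1}{D}\inner{Z_a^{-1}}{\rho_a}$; the inequality holds exactly when this trace is at most $1$, i.e. when $\inner{Z_a^{-1}}{\rho_a} \leq D$. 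Substituting this scalar constraint for the matrix constraint in the positive-definite-restricted problem yields precisely (\ref{eqn:newdual}). The step I expect to need the most care is the value-preserving interchange in the second paragraph: one must verify that the perturbed point is genuinely feasible (in particular the sum constraint $\sum_a Z_a + \eps D\, I_{\calB} \preceq (s + \eps D) I_{\calB}$) and that shrinking the feasible set can only raise the optimum, so that replacing the attained $\min$ by an $\inf$ is legitimate even though the optimizer may escape to the boundary of the positive definite cone.
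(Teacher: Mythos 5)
Your proposal is correct and follows essentially the same route as the paper: the paper's argument (given in the text preceding the lemma) likewise perturbs an optimal solution to $(s+\eps D,\, Z_1+\eps I_{\calB}, \ldots, Z_D+\eps I_{\calB})$ to pass to positive definite $Z_a$ at the cost of attainment, and then uses the congruence by $I_{\calA}\otimes Z_a^{-1/2}$ together with the rank-one trace observation to replace $I_{\calA}\otimes Z_a \succeq \frac{1}{D}\kb{\psi_a}$ by $\inner{Z_a^{-1}}{\rho_a}\leq D$. The additional care you flag about feasibility of the perturbed point and the legitimacy of switching from $\min$ to $\inf$ is exactly the content the paper leaves implicit, and your verification of it is sound.
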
 

We also refer to the optimization problem~(\ref{eqn:newdual}) as Alice's dual SDP and we distinguish them by equation number. 
 
%------------------------------------------------------------------------------------------------------------------------------------------------------%
\section{Finding a decent $\DRIC$-protocol} 
\label{sect:qprotocol}
%------------------------------------------------------------------------------------------------------------------------------------------------------% 

In this section, we exhibit a $\DRIC$-protocol which has the same cheating probabilities as Protocol~\ref{protocol:classical}:  
\[ P_B^* = \frac{1}{m} \quad \text{ and } \quad P_A^* = \frac{m}{D}. \] 
To do this, define $T_m$ to be the subsets of $[D]$ of cardinality $m$ and note that $|T_m| = \chews{D}{m}$. 
Consider the following states 
\[ \ket{\psi_a} := {\dfrac{1}{\sqrt{\chews{D-1}{m-1}}}}\dsum_{S \in T_m \ : \ a \in S} \ket{S}\ket{S} \in \calA \otimes \calB, \] 
for $a \in [D]$, where $\calA = \calB = \C^{|T_m|}$. Notice that  
\[ \rho_a := \tr_{\calA} \left( \kb{\psi_a} \right) = {\dfrac{1}{{\chews{D-1}{m-1}}}}\sum_{S \in T_m \ : \ a \in S} \kb{S}. \]

We now use the cheating SDPs developed in the previous section to analyze the cheating probabilities of this protocol. 

\paragraph{Cheating Bob.} 
To prove that Bob can cheat with probability at least $1/m$, suppose he measures his message from Alice in the computational basis. 
He then obtains a random subset $S \in T_m$ such that $a \in S$. 
He then guesses which integer is $a$ and responds with the appropriate choice for $b$ to get his desired outcome. He succeeds if and only if his guess for $a$ (from the $m$ choices in $S$) is correct. This strategy succeeds with probability $1/m$. Thus,  $P_B^* \geq 1/m$.

To prove Bob cannot cheat with probability greater than $1/m$, notice that $X = \dfrac{1}{D \chews{D-1}{m-1}} I_{\calB}$ satisfies 
\[  
X \succeq \frac{1}{D} \rho_a, \; \forall a \in [D],  
\] 
and thus is feasible in Bob's dual (\ref{Bobdual}). Therefore, $P_B^* \leq \tr(X) = 1/m$, as desired.  

\paragraph{Cheating Alice.} 
Alice can cheat by creating the maximally entangled state 
\[ \ket{T_m} := \frac{1}{\sqrt{|T_m|}} \sum_{S \in T_m} \ket{S}\ket{S} \in \calA \otimes \calB \] 
and sending $\calB$ to Bob. 
After learning $b$, she sends $a$ such that $a + b \modd D + 1$ is her desired outcome. 
She also sends $\calA$ to Bob (without altering it in any way).  
Thus, her cheating probability is precisely the probability of her passing Bob's cheat detection which is 
\[ \inner{\Pi_a}{\kb{T_m}} = \inner{\kb{\psi_a}}{\kb{T_m}} = |\brakett{T_m}{\psi_a}|^2 = \frac{m}{D}. \] 
Therefore, this cheating strategy succeeds with probability $m/D$, proving $P_A^* \geq m/D$. 

To prove this strategy is optimal, we use Alice's dual~(\ref{eqn:newdual}). 
Define 
\[ 
Z_a 
:= 
\dfrac{1}{D} \sum_{S \in T_m \ : \ a \in S} \kb{S} 
+ 
\eps \sum_{S \in T_m \ : \ a \not\in S} \kb{S} 
\] 
where $\eps$ is a small positive constant. $Z_a$ is invertible and we can write 
\[ 
Z_a^{-1} 
:= 
D \sum_{S \in T_m \ : \ a \in S} \kb{S} 
+ 
\dfrac{1}{\eps} \sum_{S \in T_m \ : \ a \not\in S} \kb{S}. 
\] 
We see that each $Z_a$ satisfies $\inner{Z_a^{-1}}{\rho_a} = D$, for all $a \in [D]$. Also, 
\[ Z_a 
\preceq 
\dfrac{1}{D} \sum_{S \in T_m \ : \ a \in S} \kb{S} 
+ \eps I_{\calB}
\] 
thus 
\[ \sum_{a \in [D]} Z_a 
\preceq 
\dfrac{1}{D} \sum_{a \in [D]} \sum_{S \in T_m \ : \ a \in S} \kb{S} 
+ \eps \, D \, I_{\calB} 
= \left( \dfrac{m}{D} + \eps D \right) \, I_{\calB}.  
\] 
Thus, $s = \dfrac{m}{D} + \eps D$ satisfies 
\[ s \, I_{\calB} \succeq \sum_{a \in [D]} Z_a \] 
proving $P_A^* \leq s = \dfrac{m}{D} + \eps D$, for all $\eps > 0$. Therefore, $P_A^* = m/D$, as desired. 

%------------------------------------------------------------------------------------------------------------------------------------------------------%
\section{Balancing Alice and Bob's cheating probabilities} 
\label{sect:balancing} 

This section is comprised of two parts. We first focus on reducing Bob's cheating probabilities, then Alice's. 

\subsection{Building new protocols that reduce Bob's cheating} 
%------------------------------------------------------------------------------------------------------------------------------------------------------%

We start with a lemma. 

\begin{lemma} \label{lem:Bob}
If there exists a $\DRIC$-protocol with cheating probabilities $P_A^* = \alpha$ and $P_B^* = \beta$, then there exists another $\DRIC$-protocol with cheating probabilities $P_A^* = \alpha'$ and $P_B^* = \beta'$ where 
\[ \beta' \leq (1-t) \beta + \frac{t}{D} \quad \text{ and } \quad \alpha' \leq (1-t) \alpha + t. \] 
for any $t \in (0,1)$. 
\end{lemma}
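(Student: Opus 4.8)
The plan is to realize the claimed bounds as a coherent mixture, weighted by $1-t$ and $t$, of the given $\DRIC$-protocol with the \emph{trivial} $\DRIC$-protocol in which every committed state is the same fixed state $\ket{\mu}$, independent of $a$. In that trivial protocol Bob learns nothing about $a$, so he cheats with probability exactly $1/D$, while Alice can open her commitment to any value and hence cheats with probability $1$; these are precisely the numbers $t/D$ and $t$ in the bound. Concretely, I would enlarge the spaces to $\calA \oplus \calA_1$ and $\calB \oplus \calB_1$, place $\ket{\mu}$ in the orthogonal sector $\calA_1 \otimes \calB_1$, and define the new committed states
\[ \ket{\psi_a'} := \sqrt{1-t}\,\ket{\psi_a} + \sqrt{t}\,\ket{\mu}. \]
The one computation to record here is that, since $\ket{\psi_a}$ and $\ket{\mu}$ live in orthogonal sectors of $\calA$, the cross terms drop out of the partial trace and the reduced states are block diagonal,
\[ \rho_a' = \tr_{\calA}(\kb{\psi_a'}) = (1-t)\,\rho_a \oplus t\,\rho_\mu, \]
with $\rho_\mu := \tr_{\calA_1}(\kb{\mu})$ independent of $a$.

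For Bob, I would bound the new protocol's cheating probability $\beta'$ using his dual (\ref{Bobdual}). Taking $\tilde X$ optimal for the original protocol (so $\tilde X \succeq \frac1D \rho_a$ and $\tr(\tilde X) = \beta$), the block-diagonal matrix $X' := (1-t)\tilde X \oplus \frac{t}{D}\rho_\mu$ satisfies $X' \succeq \frac1D \rho_a'$ blockwise, and $\tr(X') = (1-t)\beta + t/D$. This yields $\beta' \leq (1-t)\beta + t/D$ immediately.

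For Alice, the subtlety is that $\kb{\psi_a'}$ contains coherences $\ket{\psi_a}\bra{\mu}$ between the two sectors, so I would work with the reformulated dual (\ref{eqn:newdual}), whose only data is the \emph{block-diagonal} reduced state $\rho_a'$; this sidesteps the cross terms entirely. I would try a block-diagonal $Z_a' := \lambda_0 \tilde Z_a \oplus \lambda_1 R_a$, where $\tilde Z_a$ is ($\eps$-)optimal for the original protocol and $R_a$ is optimal for the trivial branch. The constraint then reads $\frac{1-t}{\lambda_0}\inner{\tilde Z_a^{-1}}{\rho_a} + \frac{t}{\lambda_1}\inner{R_a^{-1}}{\rho_\mu} \leq D$; since each inner product is at most $D$, it suffices that $\frac{1-t}{\lambda_0} + \frac{t}{\lambda_1}\leq 1$, while the objective is $s = \max\{\lambda_0(\alpha+\eps),\, \lambda_1\}$. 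Balancing $\lambda_0(\alpha+\eps) = \lambda_1 = s$ turns the constraint into $s \geq (1-t)(\alpha+\eps) + t$, and letting $\eps \downarrow 0$ gives $\alpha' \leq (1-t)\alpha + t$.

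The crux, and the step I expect to be most delicate to present cleanly, is this last balancing. A naive block-diagonal dual forces $s$ to be the \emph{maximum} of the two branch values (here $\max\{\alpha,1\} = 1$, which is useless), whereas the correct answer is their $(1-t),t$-weighted average. The averaging is recovered precisely because the two branches share the single budget $D$ in the constraint $\inner{(Z_a')^{-1}}{\rho_a'} \leq D$: spending less of it on one branch (larger $\lambda$) frees room for the other, and the optimal trade-off is exactly the scaling above. I would also emphasize that using the trace-form dual (\ref{eqn:newdual}) rather than (\ref{Alicedual}) is what lets me ignore the inter-sector coherences, since all that survives in it is $\rho_a'$.
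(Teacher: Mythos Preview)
Your proposal is correct and is essentially the paper's proof. The paper makes the simplest choice $\ket{\mu}=\ket{\perp}\otimes\ket{\perp}$ with $\calA_1,\calB_1$ one-dimensional (so $\rho_\mu=\kb{\perp}$ and your $R_a$ is the scalar $1/D$), and your scalings $\lambda_0,\lambda_1$ are exactly the paper's $\delta,\eps D$; your framing of the added branch as ``the trivial $\DRIC$-protocol with $P_A^*=1$ and $P_B^*=1/D$'' is a pleasant conceptual gloss on the same construction.
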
 

\begin{proof} 
To prove this lemma, fix a $\DRIC$-protocol with cheating probabilities $P_A^* = \alpha$ and $P_B^* = \beta$ defined by the states $\ket{\psi_a} \in \calA \otimes \calB$, for $a \in [D]$. 
Extend each of the  Hilbert spaces $\calA$ and $\calB$ by another basis vector $\ket{\perp}$ and denote these Hilbert spaces by $\calA'$ and $\calB'$, respectively. 
In short, $\calA' := \calA \oplus \spann \{ \ket{\perp} \}$ and $\calB' := \calB \oplus \spann \{ \ket{\perp} \}$. 
Note that 
\[ 
\brakett{\perp, \perp \!}{\psi_a} = 0, \; \text{ for all } a \in [D]. 
\] 

We now analyze the cheating probabilities of Alice and Bob in the new $\DRIC$-protocol defined by the states 
\[ 
\ket{\psi'_a} := \sqrt{1-t} \ket{\psi_a} + \sqrt{t} \ket{\perp, \perp} \in \calA' \otimes \calB', \; \text{ for all } a \in [D] 
\] 
as a function of $t \in (0,1)$.  
For this, note that 
\[ 
\rho'_a := \tr_{\calA} \left( \kb{\psi'_a} \right) = (1-t) \, \rho_a + t \, \kb{\perp},  
\]   
where $\rho_a := \tr_{\calA} \left( \kb{\psi_a} \right)$. 

\medskip
Intuitively, Alice can cheat more if the states $\ket{\psi_a}$ are ``close'' to each other. On the other hand, Bob can cheat more if they are ``far apart'', at least on the subsystem he receives for the first message. What this protocol modification does is make all the states closer together to increase Alice's cheating probability but to decrease Bob's. 

\paragraph{Cheating Bob.} 
Let $X$ be an optimal solution to Bob's dual~(\ref{Bobdual}) for the original protocol. 
So $\tr(X) = \beta$ and $X \succeq \frac{1}{D} \, \rho_a$, for all $a \in [D]$. 

To upper bound Bob's cheating probability in the new protocol, we show that 
\[ 
X' := (1-t)X + \frac{t}{D} \kb{\perp} 
\] 
is feasible for Bob's dual for the new protocol.   
We have 
\[ X' = (1-t)X + \frac{t}{D} \kb{\perp} \succeq \frac{1-t}{D} \rho_a + \frac{t}{D}  \kb{\perp} = \frac{1}{D} \rho'_a, \] 
for all $a \in [D]$. 
Thus $X'$ is feasible proving that $P_B^* \leq \tr(X') = (1-t) \beta + t/D$ for the new protocol. 
 
\paragraph{Cheating Alice.} 
We now repeat the same process for Alice. 
Let $(s, Z_1, \ldots, Z_D)$ be a feasible solution for Alice's dual~(\ref{eqn:newdual}) for the original protocol. 
That is, $s I_{\calB} \succeq \sum_{a \in [D]} Z_a$ and  each positive definite $Z_a$ satisfies $\inner{Z_a^{-1}}{\rho_a} \leq D$, for each $a \in [D]$. 
Define 
\[  
Z'_a := \delta \, Z_a + \eps \, \kb{\perp}, 
\] 
for $a \in [D]$, for some choice of positive constants $\delta$ and $\eps$ to be specified later. 
Notice that 
\[ (Z'_a)^{-1} = \dfrac{1}{\delta} \, Z_a^{-1} + \dfrac{1}{\eps} \kb{\perp}. \]   
To show the analogous constraints are satisfied with $Z'_a$, recall that $\inner{\kb{\perp}}{\rho_a} = 0$ for all $a \in [D]$.  
Using this, we have 
\[ 
\inner{(Z_a')^{-1}}{\rho'_a} 
=  
\dfrac{1}{\delta} \inner{Z_a^{-1}}{\rho'_a} 
+ 
\dfrac{1}{\eps} \inner{\kb{\perp}}{\rho'_a} 
\leq  
\dfrac{D(1-t)}{\delta} + \dfrac{t}{\eps}. 
\] 
By choosing $\delta$ and $\eps$ appropriately, we can make the quantity on the right equal to $D$. 
To finish the proof of feasibility, note that 
\[ 
\sum_{a \in [D]} Z'_a 
=
\delta \sum_{a \in [D]} Z_a + \eps D \kb{\perp}
\preceq 
\delta s \, I_{\calB} + \eps D \kb{\perp} 
\preceq 
s' I_{\calB'} 
\] 
where $s' := \max \{ \delta s, \eps D \}$. 
By choosing 
\[ 
\eps =  \dfrac{s (1-t) + t}{D} > 0
\quad \text{ and } \quad 
\delta = (1 - t) + \frac{t}{s} > 0
\]
we get $\inner{(Z'_a)^{-1}}{\rho'_a} \leq D$ and $s' = s (1-t) + t$. Since $s$ can be taken to be arbitrarily close to $\alpha$, we have $P_A^* \leq (\alpha + \eps')(1-t) + t$ for all $\eps' > 0$, finishing the proof. 
\end{proof} 

Note that this lemma is useful when $\beta > \alpha$. 
In this case, one can choose 
\[ t = \dfrac{\beta - \alpha}{(1 - 1/D) + (\beta - \alpha)} \in (0,1) \] 
to equate the upper bounds. 
If $\alpha > \beta$, then no choice of $t \in (0,1)$ will make the two upper bounds in Lemma~\ref{lem:Bob} equal. 
We summarize in the following corollary. 

\begin{corollary} \label{cor:Bob}
If there exists a $\DRIC$-protocol with cheating probabilities $P_A^* = \alpha$ and $P_B^* = \beta$, \emph{with $\beta > \alpha$}, then there exists another $\DRIC$-protocol with maximum cheating probability 
\[ \max \{ P_A^*, P_B^* \} \leq \frac{D \beta - \alpha}{D \beta - D \alpha + D - 1} < \beta. \] 
\end{corollary}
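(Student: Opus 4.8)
The plan is to invoke Lemma~\ref{lem:Bob} directly, feeding it the single value of the free parameter $t$ that forces its two upper bounds to coincide. Since we are in the regime $\beta > \alpha$, I would set
\[ t := \frac{\beta - \alpha}{(1 - 1/D) + (\beta - \alpha)}, \]
and first check that this is a legitimate parameter, i.e.\ that $t \in (0,1)$ so that Lemma~\ref{lem:Bob} applies verbatim. The numerator $\beta - \alpha$ is positive by hypothesis, and the denominator exceeds the numerator by exactly $1 - 1/D > 0$ (valid for $D \geq 2$); hence $0 < t < 1$. I would also record the convenient identity $1 - t = (1-1/D)/[(1-1/D)+(\beta-\alpha)]$ for the next step.

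With this $t$ fixed, the two bounds $(1-t)\beta + t/D$ and $(1-t)\alpha + t$ supplied by Lemma~\ref{lem:Bob} are equal by the very choice of $t$, so it suffices to simplify either one; this yields a protocol with $\max\{P_A^*, P_B^*\}$ at most their common value. Evaluating $\alpha' \leq (1-t)\alpha + t$ over the common denominator gives numerator $\beta - \alpha/D$, and multiplying top and bottom by $D$ turns the expression into $(D\beta - \alpha)/(D\beta - D\alpha + D - 1)$, which is precisely the fraction claimed in the statement. This portion is purely routine algebra, and the denominator is manifestly positive since $D\beta - D\alpha + D - 1 = D(\beta-\alpha) + (D-1) > 0$.

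For the final strict inequality $< \beta$, the cleanest route avoids cross-multiplying the fraction: I would instead read the common bound in the equivalent form $(1-t)\beta + t/D = \beta - t(\beta - 1/D)$. Since $t > 0$, this lies strictly below $\beta$ exactly when $\beta > 1/D$. I expect the only point requiring genuine care is justifying this strict lower bound on $\beta$, since everything else is bookkeeping. It follows from completeness: a cheating Alice may simply play honestly (committing to any fixed $a$ and revealing it), which passes Bob's cheat detection and produces each outcome $d$ with probability $1/D$ as $b$ ranges uniformly, so $\alpha = P_A^* \geq 1/D$; combined with the hypothesis $\beta > \alpha$ this gives $\beta > 1/D$ strictly, completing the argument.
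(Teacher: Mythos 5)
Your proposal is correct and follows essentially the same route as the paper, which likewise substitutes $t = \frac{\beta-\alpha}{(1-1/D)+(\beta-\alpha)} \in (0,1)$ into Lemma~\ref{lem:Bob} to equate the two upper bounds and simplifies to the stated fraction. Your justification of the final strict inequality via $\beta > \alpha = P_A^* \geq 1/D$ (honest play forces any outcome with probability $1/D$) is a detail the paper leaves implicit, and it is handled correctly.
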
 
 
%------------------------------------------------------------------------------------------------------------------------------------------------------%
\subsection{Building new protocols that reduce Alice's cheating}    
%------------------------------------------------------------------------------------------------------------------------------------------------------%

In this subsection, we show how to reduce Alice's cheating probabilities in a $\DRIC$-protocol.  

\begin{lemma} \label{lem:Alice}
If there exists a $\DRIC$-protocol with cheating probabilities $P_A^* = \alpha$ and $P_B^* = \beta$, then there exists another $\DRIC$-protocol with cheating probabilities $P_A^* = \alpha'$ and $P_B^* = \beta'$ where 
\[ \beta' \leq (1-t) \beta + t \quad \text{ and } \quad \alpha' \leq (1-t) \alpha + \frac{t}{D}, \] 
for $t \in (0,1)$. 
\end{lemma}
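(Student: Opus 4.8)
The plan is to mirror the construction in Lemma~\ref{lem:Bob}, but reversed: whereas there we pulled the states $\ket{\psi_a}$ together by adjoining a single common component $\ket{\perp,\perp}$, here we push them apart by adjoining a \emph{distinct} component for each $a$. Concretely, I would extend $\calA$ and $\calB$ each by $D$ fresh orthonormal vectors $\ket{1}, \ldots, \ket{D}$, write the enlarged spaces as $\calA'$ and $\calB'$, and define the new protocol by
\[ \ket{\psi'_a} := \sqrt{1-t}\,\ket{\psi_a} + \sqrt{t}\,\ket{a}\ket{a} \in \calA' \otimes \calB', \quad \text{for all } a \in [D]. \]
Since each $\ket{a}$ is orthogonal to the support of every $\ket{\psi_b}$, the states $\ket{\psi'_a}$ are normalized, and tracing out $\calA'$ gives $\rho'_a = (1-t)\rho_a + t\,\kb{a}$. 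The tag $\ket{a}$ now perfectly distinguishes the reduced states on the new subspace, which is exactly what drives Bob's cheating up and Alice's down.

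For Bob's bound, I would take an optimal $X$ for Bob's dual~(\ref{Bobdual}) of the original protocol, so $\tr(X) = \beta$ and $X \succeq \frac{1}{D}\rho_a$, and propose the feasible point $X' := (1-t)X + \frac{t}{D}\sum_{b \in [D]} \kb{b}$ for the new protocol. Because the original block and the $D$-dimensional tag block are orthogonal, verifying $X' \succeq \frac{1}{D}\rho'_a$ splits into the two immediate estimates $(1-t)X \succeq \frac{1-t}{D}\rho_a$ and $\frac{t}{D}\sum_b\kb{b} \succeq \frac{t}{D}\kb{a}$. The trace is $\tr(X') = (1-t)\beta + t$, yielding $\beta' \leq (1-t)\beta + t$.

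For Alice's bound, I would take a solution $(s, Z_1, \ldots, Z_D)$ feasible in Alice's dual~(\ref{eqn:newdual}) and set $Z'_a := \delta Z_a + \eps\,\kb{a} + \eta\sum_{b \neq a}\kb{b}$, where $\delta, \eps > 0$ are tuned and $\eta > 0$ is a vanishing regularizer whose only role is to make $Z'_a$ positive definite on all of $\calB'$ (recall the inf in~(\ref{eqn:newdual}) need not be attained). Orthogonality of the blocks gives $(Z'_a)^{-1} = \frac{1}{\delta}Z_a^{-1} + \frac{1}{\eps}\kb{a} + \frac{1}{\eta}\sum_{b\neq a}\kb{b}$, and since $\rho'_a$ has no support on $\ket{b}$ for $b \neq a$, the $\eta$ terms drop out of the inner product, leaving $\inner{(Z'_a)^{-1}}{\rho'_a} = \frac{1-t}{\delta}\inner{Z_a^{-1}}{\rho_a} + \frac{t}{\eps} \leq \frac{D(1-t)}{\delta} + \frac{t}{\eps}$. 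Summing gives $\sum_a Z'_a \preceq \max\{\delta s, \eps + \eta(D-1)\}\,I_{\calB'}$; the crucial point is that each distinct tag $\kb{a}$ appears only once in the sum, contributing $\eps$ rather than $\eps D$, which is precisely why the penalty here is $t/D$ instead of the $t$ of Lemma~\ref{lem:Bob}. Choosing $\eps = (1-t)s + t/D$ and $\delta = \eps/s$ makes the displayed constraint equal exactly $D$ and balances $\delta s = \eps = (1-t)s + t/D$; letting $\eta \downarrow 0$ and $s \downarrow \alpha$ then gives $\alpha' \leq (1-t)\alpha + t/D$.

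The main obstacle I anticipate is the bookkeeping around positive-definiteness of $Z'_a$: the natural candidate $\delta Z_a + \eps\,\kb{a}$ is singular on the unused tags $\ket{b}$, $b \neq a$, so one must introduce the regularizer $\eta$ and check carefully that it neither affects the trace-inner-product constraint (it does not, by orthogonality with $\rho'_a$) nor spoils the operator bound in the limit. Everything else is a direct transcription of the cheating-Bob and cheating-Alice analysis already carried out for Lemma~\ref{lem:Bob}, with the single common vector $\ket{\perp}$ replaced by the $D$ distinct vectors $\ket{a}$.
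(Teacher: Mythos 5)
Your proposal is correct and follows essentially the same route as the paper's own proof: the same states $\ket{\psi'_a} = \sqrt{1-t}\,\ket{\psi_a} + \sqrt{t}\,\ket{\perp_a}\ket{\perp_a}$ with $D$ distinct tag vectors, the same dual certificate $X' = (1-t)X + \frac{t}{D}\sum_a \kb{\perp_a}$ for Bob, and the same three-block $Z'_a$ for Alice with the identical parameter choices $\eps = (1-t)s + t/D$ and $\delta = \eps/s$, with the third coefficient sent to zero. The key observation you highlight --- that each tag appears only once in $\sum_a Z'_a$, giving the $t/D$ penalty --- is exactly the mechanism in the paper.
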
  

\begin{proof} 
To prove this lemma, fix a $\DRIC$-protocol with cheating probabilities $P_A^* = \alpha$ and $P_B^* = \beta$ defined by the states $\ket{\psi_a} \in \calA \otimes \calB$, for $a \in [D]$. 
Extend each of the Hilbert spaces $\calA$ and $\calB$ by the set of orthogonal basis vectors 
$\{ \ket{\perp_a} : a \in [D] \}$, and denote these new Hilbert spaces by $\calA'$ and $\calB'$, respectively. 
In other words, 
\[ 
\calA' := \calA \oplus \spann \{ \ket{\perp_1}, \ldots, \ket{\perp_D} \} 
\quad \text{ and } \quad 
\calB' := \calB \oplus \spann \{ \ket{\perp_1}, \ldots, \ket{\perp_D} \}. 
\] 
Note that 
\[ \brakett{\perp_{a''}, \perp_{a'} \!}{\psi_{a}} = 0, \; \text{ for all } a, a', a'' \in [D]. \] 
Again, we analyze the cheating probabilities of Alice and Bob in the new $\DRIC$-protocol defined by the states 
\[ \ket{\psi'_a} := \sqrt{1-t} \ket{\psi_a} +  \sqrt{t} \ket{\perp_a} \ket{\perp_a} \in \calA' \otimes \calB'  
\] 
for $a \in [D]$. The reduced states are 
\[ \rho'_a := (1-t) \, \rho_a + t \, \kb{\perp_a} 
\] 
for $a \in [D]$, recalling that $\rho_a := \tr_{\calA}(\kb{\psi_a})$. 
We now analyze the cheating probabilities of this new protocol as a function of $t \in (0,1)$. 

\medskip 
Intuitively, this protocol modification works in the opposite manner of the last. Here, we are making the states farther apart as to decrease Alice's cheating at the expense of increasing Bob's. 

\paragraph{Cheating Bob.} 
Let $X$ be an optimal solution for Bob's dual~(\ref{Bobdual}) for the original protocol. 
Define  
\[ X' := (1-t) X + \frac{t}{D} \sum_{a \in [D]} \kb{\perp_a} \] 
which can easily be seen to be feasible in the dual SDP for the new protocol. Thus, we have  $P_B^* \leq \tr(X') = (1-t) \beta + t$.  
  
\paragraph{Cheating Alice.} 
Let $(s, Z_1, \ldots, Z_D)$ be a feasible solution for Alice's dual~(\ref{eqn:newdual}) for the original protocol. 
That is, $s I_{\calB} \succeq \sum_{a \in [D]} Z_a$ and  each positive definite $Z_a$ satisfies $\inner{Z_a^{-1}}{\rho_a} \leq D$, for each $a \in [D]$. 

Define 
\[ 
Z'_a 
:= 
\delta \, Z_a 
+ 
\eps \, \kb{\perp_a} 
+ 
\zeta \, \sum_{c \in [D], c \neq a} \kb{\perp_c} \] 
for positive constants $\delta, \eps, \zeta$ to be specified later.  
Note that $\inner{\sum_{c \in [D], c \neq a} \kb{\perp_c}}{\rho'_a} = 0$, for all $a \in [D]$. 
We have $Z'_a$ is invertible and we can write its inverse as  
\[ 
(Z'_a)^{-1} 
= 
\frac{1}{\delta} Z_a^{-1} 
+ 
\frac{1}{\eps} \kb{\perp_a} 
+ 
\frac{1}{\zeta} \, \sum_{c \in [D], c \neq a} \kb{\perp_c}   
\]  
which satisfies 
\[ 
\inner{(Z'_a)^{-1}}{\rho'_a} 
= 
\frac{1}{\delta} 
\inner{Z_a^{-1}}{\rho'_a} 
+ 
\frac{1}{\eps} 
\inner{\kb{\perp_a}}{\rho'_a} 
\leq 
\frac{D (1-t)}{\delta} + \frac{t}{\eps}. 
\] 
Also note that
\begin{eqnarray*}
\sum_{a \in [D]} Z'_a 
& = & 
\delta \, \sum_{a \in [D]} Z_a 
+ 
\eps \, \sum_{a \in [D]} \kb{\perp_a} 
+ 
\zeta \, \sum_{a \in [D]} \sum_{c \in [D], c \neq a} \kb{\perp_c} \\
& \preceq & 
\delta s \, I_{\calB} + (\eps + \zeta(D-1)) \, \sum_{a \in [D]} \kb{\perp_a} \\ 
& \preceq & 
s' I_{\calB'}
\end{eqnarray*} 
where $s' := \max \{ \delta s, \eps + \zeta(D-1)  \}$.  
Setting  
\[ 
\eps = (1-t)s + \frac{t}{D} > 0 
\quad \text{ and } \quad 
\delta = (1-t) + \frac{t}{Ds} > 0 
\] 
we get $\inner{(Z'_a)^{-1}}{\rho'_a} \leq D$ and $s' = (1-t)s + t/D + \zeta(D-1)$. 
Since $s$ can be taken to be arbitrarily close to $\alpha$, and $\zeta$ arbitrarily close to $0$, we have $P_A^* \leq (\alpha + \eps')(1-t) + t/D + \eps'(D-1)$ for all $\eps' > 0$, 
finishing the proof. 
\end{proof} 
 
As opposed to Lemma~\ref{lem:Bob}, the above lemma is useful when $\alpha > \beta$.  
Similarly, if $\beta > \alpha$, then no choice of $t \in (0,1)$ will make the two upper bounds in Lemma~\ref{lem:Alice} equal. 

By symmetry, we have the following corollary. 

\begin{corollary} \label{cor:Alice}
If there exists a $\DRIC$-protocol with cheating probabilities $P_A^* = \alpha$ and $P_B^* = \beta$, \emph{with $\alpha > \beta$}, then there exists another $\DRIC$-protocol with maximum cheating probability 
\[ \max \{ P_A^*, P_B^* \} \leq \frac{D \alpha - \beta}{D \alpha - D \beta + D - 1} < \alpha. \] 
\end{corollary}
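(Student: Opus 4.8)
The plan is to mirror the proof of Corollary~\ref{cor:Bob}, with the roles of Alice and Bob interchanged and Lemma~\ref{lem:Alice} playing the part that Lemma~\ref{lem:Bob} played there. Starting from a $\DRIC$-protocol with $P_A^* = \alpha$ and $P_B^* = \beta$, Lemma~\ref{lem:Alice} produces, for every $t \in (0,1)$, a new $\DRIC$-protocol whose cheating probabilities obey $\alpha' \le (1-t)\alpha + t/D$ and $\beta' \le (1-t)\beta + t$. When $\alpha > \beta$ this trade-off is favourable: the large penalty $+t$ is applied to the smaller quantity $\beta$, while the small correction $+t/D$ is applied to the larger quantity $\alpha$, so for a suitable $t$ both upper bounds can be pushed below $\alpha$. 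Since $\max\{P_A^*, P_B^*\}$ is at most the larger of the two upper bounds, and one bound increases while the other decreases in $t$, the best choice of $t$ is the one that equates them.

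First I would set the two right-hand sides equal,
\[
(1-t)\alpha + \frac{t}{D} = (1-t)\beta + t,
\]
and solve for $t$. Writing $\gamma := \alpha - \beta > 0$, this rearranges to $\gamma = t\big((1 - 1/D) + \gamma\big)$, hence
\[
t = \frac{\alpha - \beta}{(1 - 1/D) + (\alpha - \beta)} \in (0,1),
\]
the membership in $(0,1)$ following from $\alpha > \beta$ and $D \ge 2$. Substituting this $t$ into the $\beta'$ bound and simplifying with $1 - t = (D-1)/\big((D-1) + D\gamma\big)$ gives the common value
\[
(1-t)\beta + t = \frac{(D-1)\beta + D(\alpha - \beta)}{(D-1) + D(\alpha-\beta)} = \frac{D\alpha - \beta}{D\alpha - D\beta + D - 1},
\]
which is exactly the claimed bound on $\max\{P_A^*, P_B^*\}$.

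It remains to verify the strict inequality $\frac{D\alpha - \beta}{D\alpha - D\beta + D - 1} < \alpha$. Because $\alpha > \beta$ and $D \ge 2$ the denominator $D(\alpha - \beta) + (D-1)$ is positive, so clearing it reduces the claim to $0 < (\alpha - \beta)(D\alpha - 1)$; equivalently, since the common value also equals $(1-t)\alpha + t/D$, the strict inequality is just $(1-t)\alpha + t/D < \alpha$, i.e.\ $\alpha > 1/D$. This is the only non-mechanical ingredient, and I expect it to be the one point that needs care rather than a genuine obstacle: it follows from Kitaev's bound $P_{A,d}^* P_{B,d}^* \ge 1/D$, which for our protocols reads $\alpha\beta \ge 1/D$; combined with $\alpha > \beta$ this forces $\alpha^2 > \alpha\beta \ge 1/D$, so $\alpha > 1/\sqrt{D} > 1/D$. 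The rest of the argument is routine algebra identical in form to Corollary~\ref{cor:Bob}.
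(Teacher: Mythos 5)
Your proposal is correct and follows essentially the same route as the paper, which proves Corollary~\ref{cor:Alice} ``by symmetry'' with Corollary~\ref{cor:Bob}: apply Lemma~\ref{lem:Alice} and choose $t$ to equate the two upper bounds, yielding the common value $\frac{D\alpha-\beta}{D\alpha-D\beta+D-1}$. Your explicit verification that the strict inequality $<\alpha$ reduces to $\alpha>1/D$, and that this follows from Kitaev's bound $\alpha\beta\ge 1/D$ together with $\alpha>\beta$, fills in a detail the paper leaves implicit.
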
 

Note that if $\alpha = \beta$, the quantity $\frac{D \alpha - \beta}{D \alpha - D \beta + D - 1}$ is equal to $\alpha (= \beta)$. Thus, we still have 
\[ \max \{ P_A^*, P_B^* \} \leq \frac{D \alpha - \beta}{D \alpha - D \beta + D - 1} \] 
holding, although no protocol modification is necessary. 
Therefore, Proposition~\ref{prop:balancing} now follows from combining Corollaries~\ref{cor:Bob} and \ref{cor:Alice} and the comment above.  
   
\section{Kitaev's lower bound and quantum state discrimination} 
\label{sect:Kitaev} 

We start this section with a short proof of Kitaev's lower bound for $\DRIC$-protocols. 

\subsection{Kitaev's lower bound}

Let $(s, Z_1, \ldots, Z_D)$ be an optimal solution for Alice's dual SDP~(\ref{Alicedual}), i.e., 
\[
P_A^* = s, 
\quad 
s I_{\calB} \succeq \sum_{a \in [D]} Z_a, 
\quad 
\text{ and } 
\quad 
I_{\calA} \otimes Z_a \succeq \frac{1}{D} \kb{\psi_a}, 
\, \text{ for all } \, 
a \in [D]. 
\]  
Note that from the last constraint in the SDP, we require that $Z_a$ is positive semidefinite for all $a \in [D]$. 
We may assume that $s I_{\calB} = \sum_{a = 1}^D Z_a$, without loss of generality, since we can always increase $Z_1$ to make this the case. I.e., we can redefine $Z_1 \to Z_1 + \left(  s I_{\calB} - \sum_{a \in [D]} Z_a \right)$ 
which maintains the same value for $s$ and still satisfies all the constraints. 
Define the matrices $M_a := \frac{1}{s} Z_a$ for all $a \in [D]$.
We see this is feasible for Bob's cheating SDP~(\ref{Bobprimal}). 
We thus have that 
\[ 
P_B^* 
 \geq  
\frac{1}{D} \sum_{a = 1}^D \inner{\rho_a}{M_a}  
=  
\frac{1}{sD} \sum_{a = 1}^D \inner{\rho_a}{Z_a}  
=  
\frac{1}{sD} \sum_{a = 1}^D \inner{\kb{\psi_a}}{I_{\calA} \otimes Z_a} 
\geq  
\frac{1}{sD^2} \sum_{a = 1}^D \inner{\kb{\psi_a}}{\kb{\psi_a}} 
\]
implying that  
$P_A^* P_B^* \geq {1}/{D}$, which is precisely Kitaev's lower bound for die-rolling. 

\begin{remark} 
This proof is slightly different than Kitaev's original proof which involves combining  Bob's and Alice's optimal dual solutions. 
The above proof takes an optimal dual solution for Alice, then creates a valid cheating strategy for Bob.  
This new perspective could shed light on the nature of dual solutions and their role in  creating point games (which are still regarded as being quite mysterious). 
Point games are beyond the scope of this work; the interested reader is referred to~\textup{\cite{Moc07, ACGKM15, NST15}} for further details. 
\end{remark} 

\subsection{Quantum state discrimination} 

Consider now a $\DRIC$-protocol but Alice now chooses $a \in [D]$ with probably $p_a$ (instead of uniformly at random). 
Then the amount Bob can cheat in this modified protocol exactly  corresponds to the success probability of a quantum state discrimination (QSD) problem. 
 
We can easily modify the optimization problem~(\ref{Bobprimal}) to see that the optimal success probability in the QSD problem is given by 
\begin{equation*} 
\beta := \label{QSD}
\max \left\{ \sum_{a \in [D]} p_a \inner{M_a}{\rho_a} : \sum_{a \in [D]} M_a = I_{\calB}, \, M_a \succeq 0, \forall a \in [D] \right\},  
\end{equation*} 
where we denote the optimal value as $\beta$ (to distinguish its context from cryptographic security for the moment). 
 
Consider again Alice's dual SDP~(\ref{Alicedual}) 
\[
\alpha := \min \left\{ s : s I_{\calB} \succeq \sum_{a \in [D]} Z_a, \, 
I_{\calA} \otimes Z_a \succeq \frac{1}{D} \kb{\psi_a}, \, \forall a \in [D], \, Z_a \textup{ is Hermitian} 
\right\}.
\]   
Then repeating the proof of Kitaev's lower bound above, 
we get that $\beta \, \alpha \geq {1}/{D}$. 
We now bound $\beta$ by bounding $\alpha$:  
\begin{eqnarray*} 
\alpha 
& = & 
\min \left\{ s : s I_{\calB} \succeq \sum_{a \in [D]} Z_a, \, 
I_{\calA} \otimes Z_a \succeq \frac{1}{D} \kb{\psi_a}, \, \forall a \in [D], \, Z_a \textup{ is Hermitian} 
\right\} \\ 
& = & 
\inf 
\left\{ 
s : s I_{\calB} \succeq \sum_{a \in [D]} Z_a, \, 
\inner{Z_a^{-1}}{\rho_a} \leq D, \forall a \in [D], \, 
Z_a \textup{ is positive definite}, \, \forall a \in [D] 
\right\} \\ 
& = & 
\inf 
\left\{ 
\lambda_{\max} \left( \sum_{a \in [D]} Z_a \right) :  
\inner{Z_a^{-1}}{\rho_a} \leq D, \forall a \in [D], \, 
Z_a \textup{ is positive definite}, \, \forall a \in [D] 
\right\}
\end{eqnarray*} 
where $\lambda_{\max}$ denotes the largest eigenvalue of a Hermitian matrix. Since $\lambda_{\max}(A) = ( \lambda_{\min}(A^{-1}) )^{-1}$ for $A$ positive definite, we have 
\[ 
\alpha 
=
\left( 
\sup 
\left\{ 
\lambda_{\min} \left( \left( \sum_{a \in [D]} Z_a \right)^{-1} \right) :  
\inner{Z_a^{-1}}{\rho_a} \leq D, \forall a \in [D], \, 
Z_a \textup{ is positive definite}, \, \forall a \in [D] 
\right\} \right)^{-1}. 
\]  
It now follows easily that 
\[ 
\frac{1}{\alpha D}  
=
\sup 
\left\{ 
\lambda_{\min} \left( \left( \sum_{a \in [D]} (D \, Z_a) \right)^{-1} \right) :  
\inner{Z_a^{-1}}{\rho_a} \leq D, \forall a \in [D], \, 
Z_a \textup{ is positive definite}, \, \forall a \in [D] 
\right\} . 
\]  
Now Proposition~\ref{prop:QSD} follows by defining $W_a := (D Z_a)^{-1}$ for all $a \in [D]$. 
 
We now mention how Proposition~\ref{prop:QSD} can be tight. We see that if we view the QSD problem from the perspective of a cheating Bob in a $\DRIC$-protocol, then the (non)tightness of Proposition~\ref{prop:QSD} is exactly characterized by the (non)tightness of Kitaev's lower bound above. Thus, the examples of $\DRIC$-protocols saturating Kitaev's lower bound, i.e., $P_B^* P_A^* = 1/D$, yield instances of the QSD problem where Proposition~\ref{prop:QSD} is tight.  
 
%%%%%%%%%%%%%%%%%%%%%%%%%%%%%%%%%%%%%%%%%%%%%%%%%%%%%%%%
 
\section*{Acknowledgements}
I thank Sevag Gharibian for useful discussions. 
J.S. is supported in part by NSERC Canada. 

Research at the Centre for Quantum Technologies at the National University of Singapore is partially funded by the Singapore Ministry of Education and the National Research Foundation, also through the Tier 3 Grant ``Random numbers from quantum processes,'' (MOE2012-T3-1-009). 

\nocite{CK11}
\nocite{CK09}
\nocite{Kit03} 
\nocite{CKS13}
\nocite{NST15}
\nocite{NST16}
\nocite{AS10}
\nocite{LC97}
\nocite{May97}
\nocite{LC97a}
\nocite{NS03}
\nocite{SR01} 
    
\bibliographystyle{alpha}
\bibliography{CCF_bib.bib} 

\end{document}